\documentclass[journal]{IEEEtran}
\usepackage{graphicx}
\usepackage{amsmath,amssymb,epsfig,xcolor} 
\usepackage{subfigure}
\usepackage{caption}
\usepackage{tabularx,multirow,array}
\usepackage{fancyhdr}
\usepackage{times}
\usepackage{booktabs}
\usepackage{multirow}
\usepackage{soul}
\usepackage{stfloats}
\usepackage{float}
\usepackage{diagbox}
\usepackage{makecell}
\usepackage{enumerate}
\usepackage{xcolor}
\usepackage[countmax]{subfloat}
\usepackage{algpseudocode}
\usepackage{threeparttable}
\pagenumbering{arabic}
\usepackage{cite}
\usepackage{ulem}
\usepackage{doi}
\usepackage{url}
\usepackage{amsthm}
\usepackage{algorithm}  
\usepackage{algorithmicx} 
\usepackage{algpseudocode}
\usepackage{geometry}
\allowdisplaybreaks[4]

\setlength{\abovedisplayskip}{3pt}
\setlength{\belowdisplayskip}{3pt}
\setlength{\columnsep}{0.22 in}
\geometry{a4paper,left=0.57in,right=0.57in,top=0.9in}
\theoremstyle{plain}
 \newtheorem{thm}{Theorem}
 
 \newtheorem{lemma}{Lemma}
 
 \newtheorem{remark}{Remark}
\SetMathAlphabet{\mathcal}{bold}{OMS}{cmsy}{b}{n}
\setlength{\textfloatsep}{5pt}
\hypersetup{
colorlinks=true,
linkcolor=black,
citecolor=black
}

\begin{document}

\title{\huge{Age of Incorrect Information in Semantic Communications\\ for NOMA Aided XR Applications}}

\author{Jianrui~Chen,~\IEEEmembership{Student Member,~IEEE,}
Jingjing~Wang,~\IEEEmembership{Senior Member,~IEEE,}\\
Chunxiao~Jiang,~\IEEEmembership{Senior Member,~IEEE,}
Jiaxing~Wang,~\IEEEmembership{Student Member,~IEEE,}
}

\markboth{IEEE Journal of Selected Topics in Signal Processing}%
{Shell \MakeLowercase{\textit{et al.}}: A Sample Article Using IEEEtran.cls for IEEE Journals}

\maketitle
\newcommand\blfootnote[1]{%
\begingroup
\renewcommand\thefootnote{}\footnote{#1}%
\addtocounter{footnote}{-1}%
\endgroup
}

\blfootnote{This work of Jingjing Wang was partly supported by the Young Elite Scientist Sponsorship Program by the China Association for Science and Technology under Grant No. 2020QNRC001, and partly supported by the Fundamental Research Funds for the Central Universities. (\textit{Corresponding author: Jingjing Wang.})}

\blfootnote{Jianrui Chen and Jingjing Wang are with the School of Cyber Science and Technology, Beihang University, Beijing 100191, China (e-mail: chenjr2020@foxmail.com, drwangjj@buaa.edu.cn).} 

\blfootnote{Chunxiao Jiang is with Beijing National Research Center for Information Science and Technology (BNRist), Tsinghua University, Beijing, 100084, China (e-mail: jchx@tsinghua.edu.cn).}

\blfootnote{Jiaxing Wang is with the School of Electronic and Information Engineering, Beihang University, Beijing 100191, China (e-mail: wang\_jx@buaa.edu.cn).}

\vspace{-15 mm}

\begin{abstract}
\noindent As an evolving successor to the mobile Internet, the extended reality (XR) devices can generate a fully digital immersive environment similar to the real world, integrating integrating virtual and real-world elements. However, in addition to the difficulties encountered in traditional communications, there emerge a range of new challenges such as ultra-massive access, real-time synchronization as well as unprecedented amount of multi-modal data transmission and processing. To address these challenges, semantic communications might be harnessed in support of XR applications, whereas it lacks a practical and effective performance metric. For broadening a new path for evaluating semantic communications, in this paper, we construct a multi-user uplink non-orthogonal multiple access (NOMA) system to analyze its transmission performance by harnessing a novel metric called age of incorrect information (AoII). First, we derive the average semantic similarity of all the users based on DeepSC and obtain the closed-form expressions for the packets' age of information (AoI) relying on queue theory. Besides, we formulate a non-convex optimization problem for the proposed AoII which combines both error-and AoI-based performance under the constraints of semantic rate, transmit power and status update rate. Finally, in order to solve the problem, we apply an exact linear search based algorithm for finding the optimal policy. Simulation results show that the AoII metric can beneficially evaluate both the error- and AoI-based transmission performance simultaneously.

\end{abstract}

\begin{IEEEkeywords}
Age of incorrect information (AoII), semantic communication, extended reality (XR), metaverse.
\end{IEEEkeywords}

\section{Introduction}
\IEEEPARstart{A}{s} one of the key technologies to realize the metaverse, extended reality (XR) is a term that refers to all real-and-virtual combined environments and human-machine interactions generated by computer technologies and wearables, where the `X' represents any current or future spatial computing technology, which aims for providing a unique immersive experience by being endowed with motion sensing, artificial intelligence algorithms for the sake of supporting different sensors in collecting, analyzing and conveying the users' facial expression variations, body movements, speech prosody as well as surrounding environment. In this context, Meng \textit{et al.}\cite{she2023} proposed a sampling, communication and prediction co-design XR framework for synchronizing the real-world devices and their digital models with high reliability. Moreover, Wang \textit{et al.}\cite{wang2023} proposed a novel distributed metaverse architecture and presented an in-depth survey of security and privacy preservation measures conceived for the distributed metaverse architecture considered. However, ultra-massive access and real-time synchronization impose more stringent requirements on the capacity and efficiency in XR than those in fifth-generation (5G)\cite{wu2020}\cite{wangCST}. The bottleneck lies in flawlessly yet efficiently transmitting and processing an unprecedented amount of heterogeneous multi-modal and interference-contaminated data while supporting billions of users. To address this, semantic communications become a good choice\cite{xu_semantic}. In contrast to the traditional Shannonian paradigm, semantic communications extract the most salient information features and only transmit the information that is the most relevant to the specific tasks at the receiver, which results in significant reduction in data traffic. Semantic communication technology can complement XR communications to create a more effective communication experience. For instance, XR technologies can be used to provide visual aids and demonstrations to help clarify complex concepts, while semantic communication techniques can be used to ensure that the message is conveyed clearly and accurately. Moreover, traditional orthogonal multiple access (OMA) schemes, which can only deliver one status update within one time slot, are not suitable in multi-user XR communications\cite{wuNOMA}. Hence, non-orthogonal multiple access (NOMA) schemes are beneficially adopted to improve the spectrum efficiency \cite{Ding2018}.\par
\subsection{Related Works}
In XR applications, the high-end wearable devices equipped with different sensors will collect users' newest data, extract their goal-oriented semantic content and then deliver them to the base station (BS) timely. Then the users' status cache in the BS will be updated according to the received packets and broadcast for satisfying each local user's request. Like any system, the performance of XR communications is contingent on which metric we concern about our goal, such as bit error rate, latency, throughput, etc. However, these traditional metrics treat the packets equally without considering their different value or amount of information brought to the destination, which is essential to support semantics-empowered XR communications. Given that there will be much more access and more strict real-time transmission requirements in XR communications, a question arises: are the traditional communication paradigms still suitable for such demand? In view of the disadvantages of the traditional metrics applied in XR communications, more and more new metrics are emerging to evaluate the performance of the XR communications by measuring the packets' different processing priorities according to users' ultimate goals. In a nutshell, the proposed metrics can be generally divided into the two following categories, time-based metrics and error-based metrics.\par
\begin{itemize}
  \item \textbf{Time-based metrics:} Time-based metrics measure the transmission performance from the perspective of time and the most used metric is transmission latency. At the time of writing, age of information (AoI) proposed in \cite{AOI} has been drawing significant attention and making lots of achievements on energy-constrained sensor networks\cite{fangjsac}, capacity-constrained data caching\cite{AOIcache}, etc. It quantifies the notion of information freshness by measuring the information time lag from being generated at the transmitter to being delivered successfully at the destination. By harnessing AoI to measure the data packet's timeliness, the packets will have different processing priorities and no longer be treated equally. However, the ultimate goal of XR communications is to achieve the best real-time estimation of the status update of interest at the receiver side. AoI provides a novel perspective to evaluate the information freshness, but it has been proved that an AoI-optimal policy may far from minimizing the status error, and vice versa\cite{MSE}\cite{Jiang2019}. Hence, researchers are prompted to propose new time-based performance metrics, such as age of synchronization (AoS)\cite{AOS}, sampling age\cite{sampling_age} and so on.
  \item \textbf{Error-based metrics:} In conventional bit communications, the goal is to transmit every bit sequence correctly via noisy channel as much as possible, thus bit-error rate (BER) or symbol-error rate (SER) are usually used to measure the signal distortion. However, the core of semantics-empowered XR communications is deep semantic level faithfulness instead of shallow bit-level accuracy. The transmitter will only extract the relevant information from the raw messages and transmit the semantic symbols to the receiver. Hence, BER and SER are not suitable to measure the semantic information mismatch in XR communications any more. For different data sources, including text, image and speech, there emerges novel metrics to measure the performance. For instance, word-error rate\cite{WER} and bilingual evaluation understudy (BLEU) score\cite{BLEU} are adopted in text transmission to measure the similarity between sentences. Adversarial loss\cite{adversarial} and Fr$\acute{e}$chet inception distance\cite{GANs} metrics have been proposed to measure the similarity between images. \cite{wuVR} applied average quality of experience to improve the VR transmission. Besides, \cite{xuJSTSP} concluded the video/image assessment measures for XR applications in detail. And moreover, metrics including perceptual evaluation of speech quality (PESQ)\cite{speech1}, short-time objective intelligibility (STOI)\cite{speech2} and perceptual objective listening quality assessment (POLQA)\cite{speech3}, etc. are proposed to measure the global semantic content of speech signals.
\end{itemize}\par

\subsection{Motivation and Contributions}
Although the above novel metrics have shown their eminent improvements on the data freshness and transmission accuracy in XR applications, they cannot take into account both content and timeliness simultaneously. To ameliorate this issue, age of incorrect information (AoII)\cite{AOII1} is proposed to extend the notion of fresh updates to that of fresh `informative' updates, which is capable of capturing the deteriorating effect the incorrect information can cause with time on the system. In contrast to the previous studies only focusing on transmission error or AoI optimization, in this paper, we harness AoII to deal with the shortcomings of both the time-based and error-based functions in XR communications. To that end, we summarize our contributions as follows:
\begin{itemize}
  \item To the best of our knowledge, this is the first contribution to optimize both the AoI and the semantic similarity of semantic communication aided XR applications. We construct a multi-user XR communication uplink system, where the users equipped XR devices transmit their newest updates timely to the BS for further use. Moreover, we consider there are two kinds of packets with different priorities, which makes the model more applicable but more complex. Relying on the queue theory, we formulated a non-convex optimization problem by combining semantic similarity and AoI.
  \item We design a beneficial power allocation and packet assignment method for striking a trade-off between the the users' average AoI and the semantic similarity. To solve the original non-convex optimization problem, we decompose it into several convex sub-problems. Harnessing the exact linear search method, the optimal policy is obtained.
  \item Simulation results show that the AoII metric is capable of capturing both the error-based and AoI-based performance features, which complement each other in our system. From the simulation results, we not only get the relationship between the semantic similarity and transmit power, but also obtain the average AoI function related to the data generation time, service time and packet assignment policy.
\end{itemize}

\begin{table}
    \caption{{\color{black}{Typical Communication Performance Metrics}}}
    \label{metrics}
    \begin{center}
      \resizebox{0.35\textwidth}{!}{
\begin{tabular}{c c c}
  \toprule
  \textbf{Category} &  \textbf{Metric} & \textbf{Reference} \\ 
  \midrule
  \multirow{3}*{Time-based} & AoI & \cite{AOI} \\
                       & AoS  & \cite{AOS}  \\
                       & Sampling Age  & \cite{sampling_age}  \\
  \midrule
  \multirow{3}*{Error-based} & WER & \cite{WER} \\
                       & BLEU  & \cite{BLEU}  \\
                       & BERT  & \cite{BERT} \\
  \bottomrule
  \end{tabular}}
\end{center}
\end{table}

\subsection{Organization}
The remainder of this article is structured as follows. Section II is dedicated to depict our system model. In Section III, closed-form expression for the average AoII based on NOMA is derived. We formulate the optimization problem and present our approach to it in Section IV. In Section V, we give the simulations and provide the performance analysis, followed by our conclusions in Section VI.

\begin{table}
  \caption{{\color{black}{Summary of Notations} }}
  \label{table:notation}
  \begin{center}
  \renewcommand{\arraystretch}{1.3}
  \begin{tabular}{c  p{6cm} }
  \hline
  {\color{black}{{\bf Notation}}} & {\hspace{2.5cm}}{\bf {\color{black}{Definition}}}
  \\
  \hline
  \textcolor{black}{$M$} & \textcolor{black}{Total number of users} \\
  \textcolor{black}{$h_{n}^{k}$} & \textcolor{black}{Channel coefficient between the user and the BS} \\
  \textcolor{black}{$p_{n}^{k}$} &  \textcolor{black}{Transmit power of the user} \\
  \textcolor{black}{$S_{n}^{k}$} &  \textcolor{black}{The semantic rate} \\
  \textcolor{black}{$\Phi_{n}$} &  \textcolor{black}{The packet volume} \\
  \textcolor{black}{$W$} &  \textcolor{black}{Available bandwidth} \\
  \textcolor{black}{$\xi$} &  \textcolor{black}{Semantic similarity} \\
  \textcolor{black}{$G_{n}$} &  \textcolor{black}{Interval of the packet generation} \\
  \textcolor{black}{$D_{n}$} &  \textcolor{black}{System delay of each packet} \\
  \textcolor{black}{$T_{n}$} &  \textcolor{black}{Transmission time of the packet} \\
  \textcolor{black}{$W_{n}$} &  \textcolor{black}{Waiting time of the packet} \\
  \textcolor{black}{$H_{n}$} &  \textcolor{black}{Service time of the packet} \\
  \textcolor{black}{$\lambda$, $\mu$} &  \textcolor{black}{Packet arrival rate, packet service rate} \\
  \textcolor{black}{$\alpha_{n}$} &  \textcolor{black}{Generation moment of each packet} \\
  \textcolor{black}{$\beta_{n}$} &  \textcolor{black}{Departure moment of each packet} \\
  \hline
  \end{tabular}
  \end{center}
  \end{table}%

\section{System Model}

\begin{figure*}[!t]
  \centering
  \includegraphics[width=0.8\textwidth]{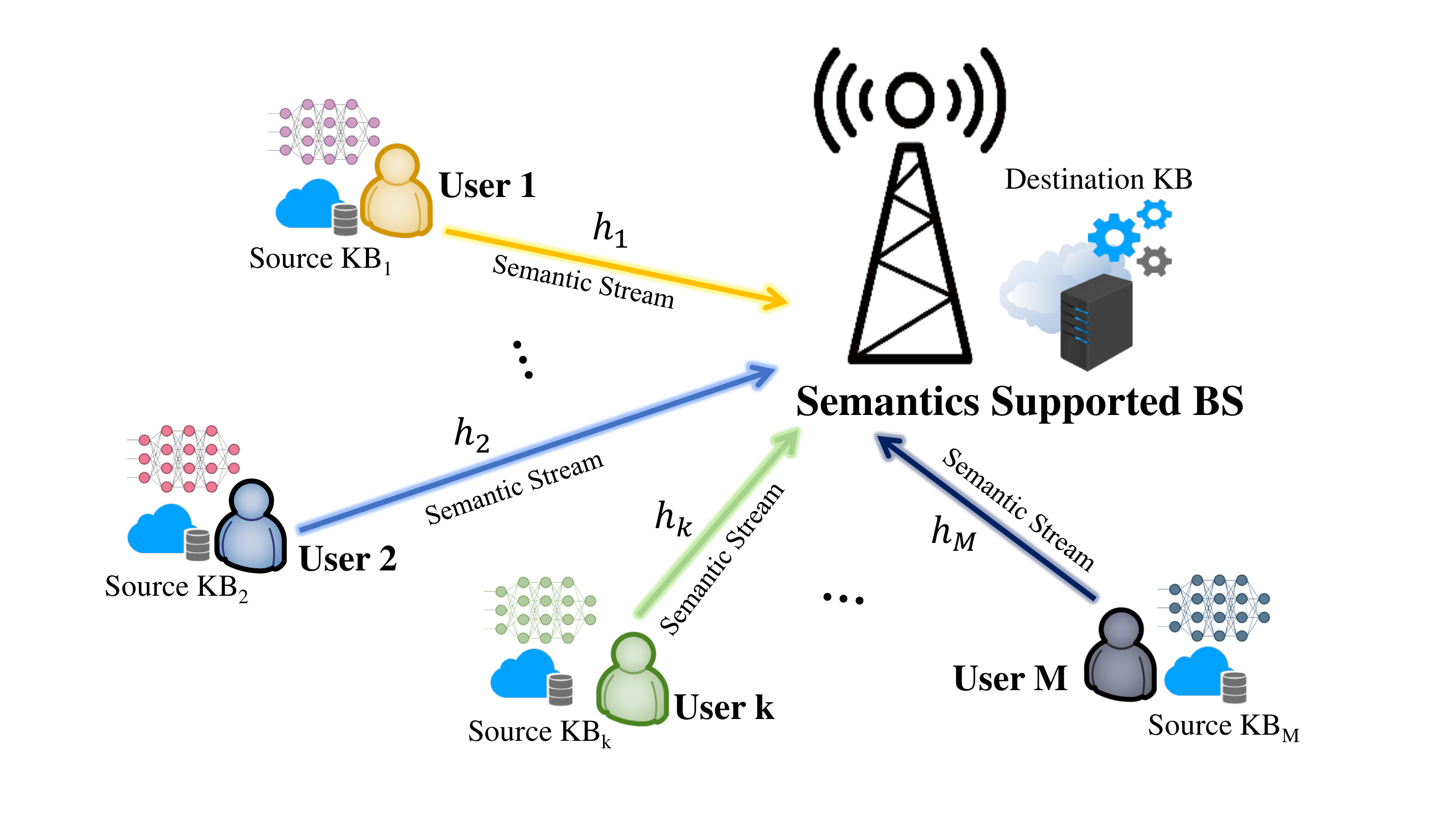}
  \caption{An illustration of multi-user uplink NOMA system.}
  \label{system_model}
\end{figure*}

\subsection{Network Architecture}
Consider a multi-user uplink NOMA network based on semantic communications as portrayed in Fig. \ref{system_model}, which consists of a semantics supported base station (BS) and $M$ users denoted by $U_{k}$ ($k\in \mathcal{M}=\{1,2,...,M\}$). Equipped with XR devices, each user's information is collected and transmitted to the BS. More specifically, the users transfer the collected data into semantic items with respect to goal-oriented metrics representing the BS's utility for information and then transmit them via NOMA. We choose NOMA as as the transmission multiple access method in this paper for two reasons: 1) compared to time division multiple access (TDMA) and frequency division multiple access (FDMA), NOMA can enable multiple users co-scheduled and significantly improve the spectral efficiency; 2) compared to TDMA and FDMA, NOMA can enable the receivers obtain fresher information from the transmitters\cite{fangjsac}. In order to implement semantic communication successfully, we assume that the users have built their own task-oriented knowledge bases (KBs), which are denoted as source $KB_{1},...,KB_{M}$, respectively. And the semantics supported BS is trusted and has the joint matched KBs of all the users as the destination KB to decode the semantic items successfully. The BS served as a cache hosts the users' timestamped items exampled by their positions, gestures and virtual avatars, etc. for multi-user voice/text/images-related XR applications. The content items contained in the BS will be replaced dynamically by newer versions uploaded by the users for further use by nearby clients.

\subsection{Data Generation Model}
Consider a transmission process which contains enough time frames and each frame comprises $N$ time slots (TSs), each of duration of $T$ seconds. Denote the $n$-th time slot of the $i$-th frame as $TS_{n}^{i}$, which starts at $t_{n}^{i}$, as shown in Fig. \ref{time_slot}.
In our model, we invest substantial efforts to text transmission of XR applications. This is because text is one of the most essential kinds of data in XR applications and image or video transmission can also transformed into text transmission in recent research\cite{imagetrans}\cite{videotrans}. We adopt DeepSC\cite{DeepSC} to effectively extract the semantics underlying texts through Transformer and assume that each user is equipped with the well-trained DeepSC. As for data generation, there are two typical data generation models which are named generate-at-will (GAW) and generate-at-request (GAR)\cite{ding2022age}. In GAW model, every user can generate its sentence and transmit it at the time slot immediately, which can ensure the freshness of the delivered sentence but will cause higher energy consumption because of repeatedly generate sentences. Thus, in this paper, we harness GAR model\footnote{GAR is capable of synchronized sensing and can reduce system complexity and energy consumption hence important in many Internet of Things (IoT) applications, such as structural health monitoring and autonomous driving.} which assumes that all users generate a sentence and then deliver their updates to the BS at the beginning of each time frame simultaneously.\par
Similar to \cite{textSC}, we let $\mathbf{s}_{k}=[w_{k,1},w_{k,2},...,w_{k,l},...,w_{k,L_{k}}]$ denote the sentence generated by the $k$-th user, where $L_{k}$ and $w_{k,l}$ represent the sentence length and the $l$-th word of the sentence. By leveraging the well-trained DeepSC, the sentence is extracted into a semantic symbol vector $\mathbf{X}_{k}=[\mathbf{x}_{k,1},\mathbf{x}_{k,2},...,\mathbf{x}_{k,\varrho  L_{n}}]$, where $\varrho $ is the average number of semantic symbols used for each word and $\varrho  L_{n}$ denotes the total length of the semantic symbol vector. Then the semantic symbol vector can be transmitted via wireless channels.

\begin{figure*}[!t]
  \centering
  \includegraphics[width=0.8\textwidth]{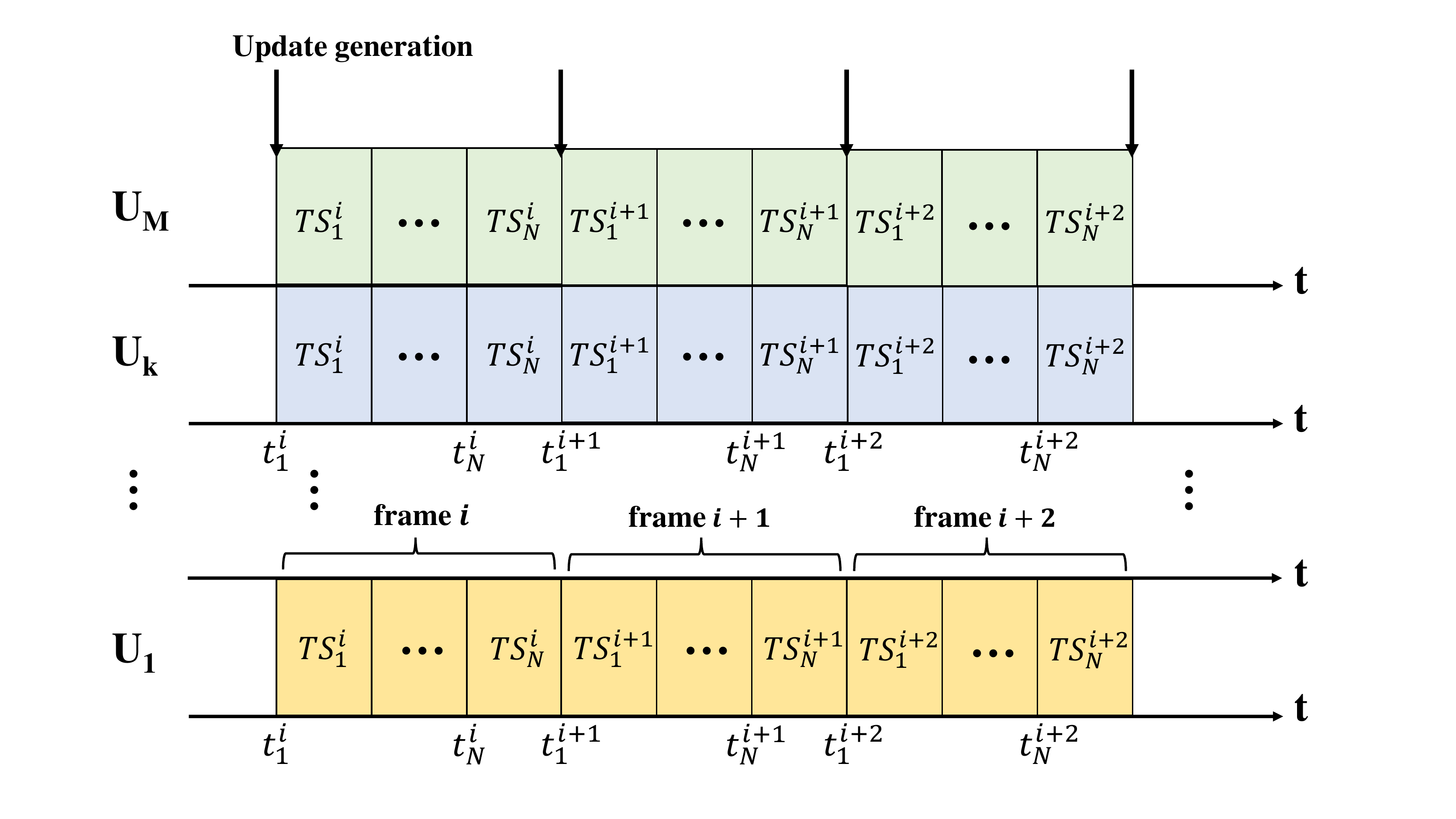}
  \caption{Considered slotted time frame structure in our NOMA system.}
  \label{time_slot}
\end{figure*}

\subsection{Communication Model and Channel Analysis}
In uplink NOMA, each user first transmits a superposition code of their semantic semantic symbol vectors to the BS sharing the same resources (time and spectrum). Note that the BS has a joint-knowledge background of all the users and is capable of decoding each user's message successfully. According to the principle of successive interference cancellation (SIC), the BS will first decode the semantic symbols of users who have a better channel condition. This is because users with a poor channel condition are more vulnerable to strong intra-cluster interference. As we have said in the data generation model, the updates are only sent to the BS at the beginning of each time frame. Thus, at the BS, the total received signal from all the users is given by

\begin{equation}
  \mathbf{Y}_{k}=\sum_{k = 1}^{M}  h_{n}^{k}\sqrt{p_{n}^{k}}\mathbf{X}_{k}+\mathbf{z},
  \end{equation}

\noindent where $h_{n}^{k}$ and $p_{n}^{k}$ are the channel coefficient of the wireless link between the $k$-th user and the BS and the transmit power at the beginning of the $n$-th frame respectively, while $\mathbf{z}$ denotes the additive white Gaussian noise (AWGN) with mean zero and variance $\sigma^{2}$. According to the distance and channel state, we assume that $h_{n}^{1}\geq h_{n}^{2}\geq \cdots \geq h_{n}^{k}\geq \cdots \geq h_{n}^{M}$. In order to achieve the best performance of NOMA, we assume that the BS conducts SIC perfectly in this paper. Then the signal-to-interference-noise ratio (SINR) of $k$-th user is given by

\begin{equation}\label{SINR}
  \begin{aligned}
    \gamma_{n}^{k}=\begin{cases}
      \frac{p_{n}^{k}|h_{n}^{k} \vert ^{2} }{\sigma^2 + \sum_{j = k+1}^{M} p_{n}^{j}|h_{n}^{j} \vert ^{2} } & 1\leq k \leq M-1,\\
      \frac{p_{n}^{k}|h_{n}^{k} \vert ^{2} }{\sigma^2 }  & k=M.\\
    \end{cases}
\end{aligned}
\end{equation}

By denoting $S_{n}^{k}$ as the semantic rate (suts/s/Hz) achieved, we have \cite{textSC}
\begin{equation}\label{semantic_rate1}
  S_{n}^{k}=\frac{WI}{\varrho L}\xi (\varrho ,\gamma_{n}^{k}),
\end{equation}
where $I$ denotes the expected amount of semantic information contained in the transmitted sentence, $\varrho$ denotes the average number of semantic symbols used for each word, $L$ denotes expected number of words of the transmitted sentence and $\xi (\varrho ,\gamma_{n}^{k})$ denotes the semantic similarity which relies on the neural network structure of DeepSC and the received SINR $\gamma_{n}^{k}$ at the BS.

\begin{lemma}\label{lemma:semantic_rate}
  {\color{black}{By leveraging DeeepSC, the semantic similarity $\xi (\varrho ,\gamma)$ relies on the neural network structure of DeepSC and channel conditions. And it can be approximated by the the generalized logistic function, which is expressed as:}}
\begin{equation}\label{semantic_rate2}
  \xi (\varrho ,\gamma) \approx \tilde{\xi} (\gamma) \triangleq A_{\varrho,1}+\frac{A_{\varrho,2}-A_{\varrho,1}}{1+e^{-(C_{\varrho,1}\gamma+C_{\varrho,2})}},
\end{equation}
where $A_{\varrho ,1}, A_{\varrho ,2} > 0$ denote the lower and the upper asymptote respectively, and $C_{\varrho ,1} > 0$ denotes the logistic growth rate, and $C_{\sigma ,2}$ controls the logistic mid-point. Specifically, for a given $\varrho$, $\xi (\varrho ,\gamma)$ is monotonically non-decreasing with the increase of $\gamma$ and $\frac{d\xi (\varrho ,\gamma)}{d\gamma}$ will first increase to a maximum value and then decrease with increasing $\gamma$.
\end{lemma}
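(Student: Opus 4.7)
My plan is to establish the lemma in two stages: an empirical stage that justifies the functional form in \eqref{semantic_rate2}, followed by a short analytic stage that verifies the monotonicity and the unimodal-derivative claim. Since DeepSC is a trained neural network whose semantic similarity cannot be written in closed form from first principles, the approximation is necessarily data-driven, and the lemma should be read as a fitting result that is then treated as an analytic function in the remainder of the paper.

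For the first stage, I would fix $\varrho$ and feed DeepSC with the standard text corpus (the one used for training/validation) under an AWGN channel whose received SINR $\gamma$ is swept over a wide range, for instance from deep-negative dB to saturation. For every sampled $\gamma$ I would compute the empirical semantic similarity $\xi(\varrho,\gamma)$ on a held-out set. Using nonlinear least squares (e.g.\ Levenberg--Marquardt), I would fit the four parameters $A_{\varrho,1}, A_{\varrho,2}, C_{\varrho,1}, C_{\varrho,2}$ of the generalized logistic in \eqref{semantic_rate2}, initialized so that $A_{\varrho,1}$ matches the low-SINR plateau (roughly the similarity of random output), $A_{\varrho,2}$ matches the high-SINR plateau (near~1), and $(C_{\varrho,1}, C_{\varrho,2})$ are seeded from a sigmoid-slope estimate at the inflection. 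The quality of the fit would be reported via coefficient of determination $R^2$ and worst-case residual, which empirically justifies the symbol "$\approx$" in the lemma. Sign constraints $A_{\varrho,1},A_{\varrho,2},C_{\varrho,1}>0$ are enforced either by bounded optimization or verified a posteriori.

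For the second stage I would simply differentiate \eqref{semantic_rate2}. Writing $u(\gamma) = e^{-(C_{\varrho,1}\gamma + C_{\varrho,2})}$, a direct computation gives
\begin{equation*}
\frac{d\tilde{\xi}}{d\gamma}
 = \frac{(A_{\varrho,2}-A_{\varrho,1})\,C_{\varrho,1}\, u(\gamma)}{\bigl(1+u(\gamma)\bigr)^{2}} ,
\end{equation*}
which is strictly positive under the fitted sign conditions, giving the monotone non-decreasing claim. A second differentiation yields
\begin{equation*}
\frac{d^{2}\tilde{\xi}}{d\gamma^{2}}
 = (A_{\varrho,2}-A_{\varrho,1})\, C_{\varrho,1}^{2}\,\frac{u(\gamma)\bigl(u(\gamma)-1\bigr)}{\bigl(1+u(\gamma)\bigr)^{3}} ,
\end{equation*}
whose sign is controlled by $u(\gamma)-1$: positive when $\gamma < -C_{\varrho,2}/C_{\varrho,1}$ and negative afterwards. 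Hence $d\tilde{\xi}/d\gamma$ strictly increases up to the inflection point $\gamma^{\star} = -C_{\varrho,2}/C_{\varrho,1}$ and strictly decreases thereafter, exactly as stated.

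The main obstacle is the first stage rather than the calculus: no analytic derivation of $\xi(\varrho,\gamma)$ from the Transformer weights is available, so the lemma hinges on the empirical claim that a four-parameter logistic adequately captures the measured similarity curve. I would mitigate this by checking the fit at several representative values of $\varrho$ (to make sure the functional form, not just a lucky parameter choice, works) and by arguing qualitatively that the low-SINR floor (decoder output converges to a corpus-prior behaviour) and the high-SINR ceiling (near-lossless reconstruction) force a sigmoidal transition, which is precisely what \eqref{semantic_rate2} encodes. Once this empirical step is accepted, the monotonicity and curvature conclusions are immediate from the elementary derivatives above.
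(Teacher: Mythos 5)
Your proposal is correct and matches the substance of the paper's proof, which is simply a citation to the reference where the generalized logistic form is obtained by exactly the empirical curve-fitting you describe; your derivative computations for the monotonicity and the unimodal behaviour of $\frac{d\xi}{d\gamma}$ are also correct (noting that strict positivity additionally uses $A_{\varrho,2}>A_{\varrho,1}$, which is implicit in calling them the lower and upper asymptotes). Nothing further is needed.
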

{\color{black}{\begin{proof}
    See Part A in Section II in \cite{semantic_rate}.{\hfill $\blacksquare$\par}
\end{proof}\par%
}}
Based on \textbf{Lemma \ref{lemma:semantic_rate}}, (\ref{semantic_rate1}) can be converted to
\begin{equation}\label{semantic_rate3}
  S_{n}^{k}=\frac{WI}{\varrho L}(A_{\varrho,1}+\frac{A_{\varrho,2}-A_{\varrho,1}}{1+e^{-(C_{\varrho,1}\gamma_{n}^{k}+C_{\varrho,2})}}).
\end{equation}

In order to ensure the successful decoding of the superposition signal, we should control the transmit power of each user to satisfy the following conditions for SIC at the BS:
\begin{equation}\label{threshold_rate}
\textup{C}_{1}: S_{n}^{k} \geq S_{th},
\end{equation}
where $S_{th}$ denotes the minimum semantic rate to ensure that the packet can be delivered to the BS within a frame.

\section{Age of Incorrect Information Analysis}
In contrast to the mentioned error-based and AoI-based semantic metrics, in this paper, we consider taking AoII as the performance measure. It cannot just present the mismatch between the received signals and the transmitted signals, but also indicate how long that mismatch has been prevailing. By adopting such a metric, we capture more the context of data and their purpose. Accordingly, we can then enable semantics-empowered communication in the network, which is more elaborate than the AoI and the error-based frameworks. Besides, given the constraint on the transmission frequency and the random nature of the channels, the transmission policy's choice has an immense effect on the system's performance. As motivated in the previous subsection, we adopt the AoII as a performance measure of the system. Here, we give the definition of the AoII as \cite{AoII}: 
\begin{equation}\label{}
  \varDelta _{AoII}(X_{t},\hat{X}_{t},t)=f(t)\cdot g(X_{t},\hat{X}_{t}),  
\end{equation}
where $f$ : $[0, +\infty ) \rightarrow [0, +\infty )$ is a non-decreasing function and $g(X_{t},\hat{X}_{t})$  : $\mathbb{D}\times \mathbb{D}\rightarrow [0, +\infty )$ where $\mathbb{D}$ is the state space of $X_{t}$. The AoII is therefore a combination of two elements:\par
1) A function $g(X_{t},\hat{X}_{t})$ that reflects the mismatch between $X_{t}$ and $\hat{X}_{t}$.\par
2) A function $f(t)$ that plays the role of increasingly penalizing the system the more prolonged a mismatch between $X_{t}$ and $\hat{X}_{t}$ is.\par
Depending on the application at hand, we can adopt an appropriate choice of $f(\cdot)$ and $g(\cdot, \cdot)$ to capture the data's purpose. In simple applications, one may be able to derive explicitly these functions $f(\cdot)$ and $g(\cdot, \cdot)$ that capture the time and information facets playing a role in data significance as will be seen in later sections. However, in more complicated scenarios, one would need to fit the functions $f(\cdot)$ and $g(\cdot, \cdot)$ using gathered or generated data on the application of interest. Next, we will give the semantic error and AoI analysis of our system.

\subsection{Semantic Error Evaluation at the BS}
The error-based metrics framework consists of taking as a network performance measure a quantitative representation of the difference between $\hat{X}_{t}$ and $X_{t}$. The hope is, by incorporating the information on $X_{t}$ and $\hat{X}_{t}$ in the performance metric, we can better utilize the available resources to let $\hat{X}_{t}$ be close to $X_{t}$. In traditional bit communications, $g(\cdot, \cdot)$ can be represented as the indicator error function $g(\cdot, \cdot)=1\{X_{t}\neq \hat{X}_{t}\}$, the squared error function $g(\cdot, \cdot)=(X_{t}-\hat{X}_{t})^{2}$ or the threshold error function $g(\cdot, \cdot)=1\{|X_{t}-\hat{X}_{t}|\geq c\}$. In contrast to the traditional bit streams, there is another metric to measure the error of the semantic information, which is defined by semantic similarity. In order to evaluate the performance of semantic communications for text transmission, we adopt the semantic similarity\cite{DeepSC} as the performance metric: 

\begin{equation}\label{BERT}
  \psi (s,\hat{s})=\frac{\mathbf{B}(s)\cdot \mathbf{B}(\hat{s})^T}{\|\mathbf{B}(s) \Vert \cdot \|\mathbf{B}(\hat{s}) \Vert},
\end{equation}
where $\mathbf{B}(\cdot)$ denotes the sentence-bidirectional encoder representations from Transformers (BERT) model \cite{BERT} to map a sentence to its semantic vector space, which is a pre-trained model with billions of sentences and achieves great improvement over state-of-the-art sentence embedding methods. According to \cite{DeepSC} and \ref{lemma:semantic_rate}, $\psi (s,\hat{s})$ depends on the average number of semantic symbols used for each word $\varrho$ and the SINR $\gamma_{n}^{k}$ thus $\xi(\varrho,\gamma_{n}^{k})=\psi (s,\hat{s})$. According to different optimization targets about text transmission, we can also select WER or BLEU mentioned in Table I as the error-based metrics. But they perform worse in our situation. Specifically, compared with other semantic metrics, such as BLEU, BERT-level similarity measures the distance of semantic information between two sentences more precisely. Besides, according to the different modals of information, image or audio transmission also have their error-based metrics, such as adversarial loss and PESQ mentioned in Section I. From (\ref{BERT}), we have $0\leq \xi(\varrho,\gamma_{n}^{k})\leq 1$ where $\xi(\varrho,\gamma_{n}^{k})$ means that two sentences has the highest similarity and $\xi(\varrho,\gamma_{n}^{k})=0$ indicates no similarity between them. In order to recover the initial sentence from the received semantic symbols successfully, we restrict the $\xi(\varrho,\gamma_{n}^{k})$ by
\begin{equation}\label{C2}
  \textup{C}_{2}: \xi(\varrho,\gamma_{n}^{k}) \geq \xi_{th}.
  \end{equation}

\subsection{AoI Evaluation at the BS}

\begin{figure}[!t]
  \centering
  \includegraphics[width=0.47\textwidth]{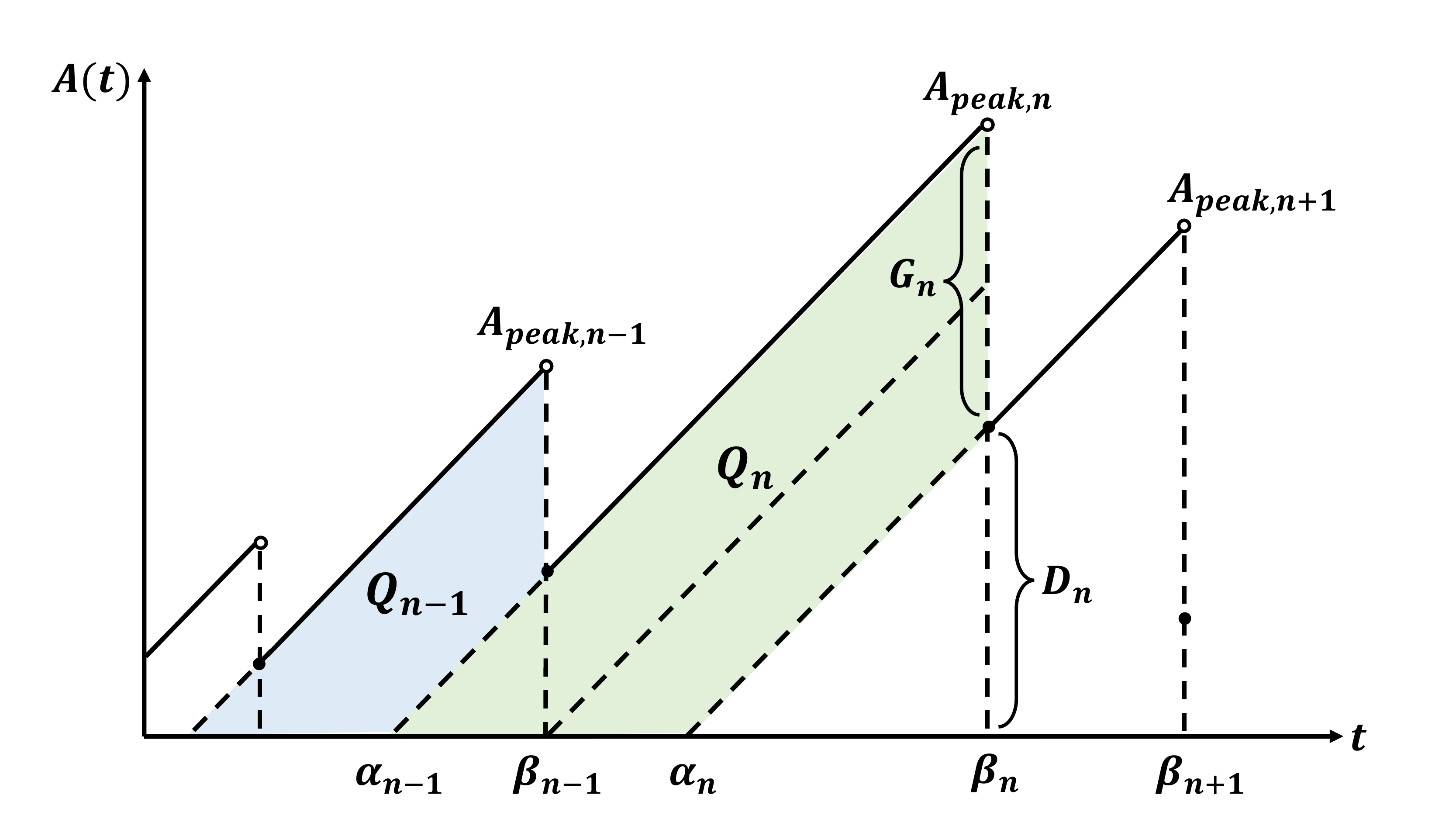}
  \caption{The AoI evolution: the $n$-th updates from the users arrive at the moment $\alpha_{0}, ..., \alpha_{n}$ and departure at the moment $\beta_{0}, ..., \beta_{n}$. For the $n$-th delivered update, $G_{n}$, $D_{n-1}$ and $H_{n}$ denote the inter-arrival time, system delay and inter-departure time, respectively.}
  \label{AoIfigure}
\end{figure}

The the AoI evolution of the $k$-th user of the $n$-th update is depicted in Fig. \ref{AoIfigure} and the its instantaneous AoI can be defined as:
\begin{equation}\label{AoI}
  A^{k}_{n}(t) = t-\alpha^{k}_{n}(t),
\end{equation}
where $\alpha^{k}_{n}(t)$ is denoted as the arrival moment of the latest status information before $t$ at the BS and $\beta^{k}_{n}(t)$ represents the departure moment of this update. We assume that these moments satisfy: $\alpha^{k}_{n-1}(t) \leq \beta^{k}_{n-1}(t) \leq \alpha^{k}_{n}(t)$. In the following section, the notations $k$ and $(t)$ are omitted for simplicity since the optimization procedure for each status update of each NOMA user is identical. We now give some important definitions for the further AoI analysis:\par
\begin{itemize}
\item $A_{peak,n}$: $A_{peak,n}$ represents the peak AoI of the $n$-th update.
\item $G_{n}$: $G_{n}$ is defined as the interval time of the two successive updates arrive at the BS, which can be given by $G_{n}=\alpha_{n}-\alpha_{n-1}$.  
\item $D_{n}$: $D_{n}$ is defined as the system delay\footnote{The system delay contains the sum of the transmission time, waiting time and the service time at the server.} of the $n$-th update, which can be given by $D_{n}=\beta_{n}-\alpha_{n}$.
\end{itemize}

Compared to peak AoI, average AoI optimization is more suitable for our system where the freshness of information is important but not critical. This approach can help to strike a balance between information freshness and communication quality, leading to a more timely and high-quality XR communication system. By using the aforementioned definitions, we now analyze the average AoI at the BS. According to the instantaneous AoI evolution, the average AoI can be expressed as $\Delta =\frac{1}{\mathcal{T}}\int_{0}^{\mathcal{T} } A(t) \,dt$, where $\mathcal{T}$ is the whole length of the observation interval. Based on Fig. \ref{AoIfigure}, we can calculate the $\mathbb{E}(A)$ by decomposing the the whole area under the $A(t)$ into the sum of the polygon area $Q_{n}$, which is highlighted in the Fig. \ref{AoIfigure}. The $Q_{n}$ can be presented as 
\begin{equation}\label{Q}
  \begin{aligned}
    Q_{n}&=\frac{1}{2}(G_{n}+D_{n})^2-\frac{1}{2}D_{n}^2\\&=G_{n}D_{n}+\frac{1}{2}G_{n}^2
  \end{aligned}
\end{equation}
We denote $N(\mathcal{T})=max\{n|t\leq\mathcal{T}\}$ as the number of updates within $\mathcal{T}$. Under mild ergodic assumptions, the average AoI can be expressed as
\begin{equation}\label{averageAoI}
  \begin{aligned}
    \Delta_{AoI}&=\frac{1}{N(\mathcal{T})}\sum_{n=1}^{N(\mathcal{T})}Q_{n}\\
    &=\frac{\frac{1}{\mathcal{T}}\sum_{n=1}^{N(\mathcal{T})}Q_{n}}{\frac{N(\mathcal{T})}{\mathcal{T}}}\\
        &=\frac{\mathbb{E} [Q_{n}]}{\mathbb{E} [G_{n}]}=\frac{\mathbb{E}[G_{n}D_{n}]+\frac{1}{2}\mathbb{E}[G_{n}^2]}{\mathbb{E} [G_{n}]},
  \end{aligned}
\end{equation}
where $\mathbb{E} [\cdot]$ denotes the expectation operator. 

In consideration of that the XR communications require real-time service of high quality, we set a schedular based on the packet' similarity and two servers with different computational performances at the BS. Equipped with the well-trained semantic decoder, the schedular of the system is able to decode the sematic sentences and calculate their semantic similarities. As shown in Fig. \ref{queue}, when the semantic packets arrive, the schedular first serves the packets and label the different priorities to them. Next, the BS will employ different servers to serve them according to their labels. More explicitly, we note the similarity value $\xi (\varrho ,\gamma_{n}^{k})\in [\hat{\xi}, 1]$ as Category I packets and note the similarity value $\xi (\varrho ,\gamma_{n}^{k})\in [\xi_{th}, \hat{\xi})$ as Category II packets, where $\hat{\xi}$ is the set boundary of the two categories. Based on the aforementioned assumptions, the system is divided into two parts, where the updates at the schedular can be modeled as a D/M/1 queue and the process at the servers can be modeled as two parallel D/M/1 queues in a first-come-first-served (FCFS) manner.\par

\begin{remark}\label{remark1}
  As shown in Fig. \ref{queue}, the total system delay $D_{n}$ can be decomposed as 
\begin{equation}
D_{n}=T_{n}+D^{[0]}_{n}+D^{[i]}_{n}, i\in \{1,2\},
\end{equation}
where $T_{n}$, $D^{[0]}_{n}$ and $D^{[i]}_{n}$ are the respective transmission time, delay time in the schedular and delay time in the server $i$ of the $n$-th update. 
The delay time in different components can be decomposed as
\begin{equation}
  D^{[i]}_{n}=W^{[i]}_{n}+H^{[i]}_{n}, i\in \{0, 1, 2\},
  \end{equation}
where $W^{[i]}_{n}$ and $H^{[i]}_{n}$ are the waiting time and service time in the different components.
\end{remark}

\begin{remark}\label{remark2}
 Since the schedular and the either server are in series, the update's departure intervals in the schedular are the arrival intervals in the server. When the queue system reaches the steady state, the update's departure intervals obey a general distribution with mean value $\mathbb{E} [D^{[0]}_{n}]$ and variance $\mathbb{D} [D^{[0]}_{n}]$. We assume that during the observation time, the proportions of the Category I updates and the Category II updates are $a$ and $b$ over the all updates ($a+b=1$), respectively. 
\end{remark}
By harnessing NOMA, all the users share the same resources including spectrum, space and time. As we mentioned in the data generation model, the packets are generated every $NT$ seconds, which is a deterministic distribution distribution. And we assume the service rates in the schedular and the two servers are $\mu _{0}$, $\mu _{1}$ and $\mu _{2}$, respectively ($\mu _{1}>\mu _{2}$). Since the servers are successive to the schedular, the packet arrival intervals of server 1 or 2 are the packet departure intervals of the schedular, which are general distributions related to the $\lambda_{0}$ and $\mu_{0}$. Thus, as for (\ref{averageAoI}), we have $\mathbb{E}[G_{n}]=\frac{1}{\lambda_{0}}=NT$ and $\mathbb{E}[G_{n}^{2}]=\frac{1}{\lambda_{0}^{2}}=N^{2}T^{2}$. Substituting in (\ref{averageAoI}), the average AoI can be written by
\begin{equation}\label{eq_aoi}
  \begin{aligned}
    \overline{\Delta}_{AoI}&=\lambda_{0}\left( \frac{\mathbb{E}[D_{n}]}{\lambda_{0}}+\frac{1}{2\lambda^{2}_{0}} \right)\\&=\mathbb{E}[D_{n}]+\frac{1}{2\lambda_{0}}.
  \end{aligned}
\end{equation}

\begin{figure}[!t]
  \centering
  \includegraphics[width=0.5\textwidth]{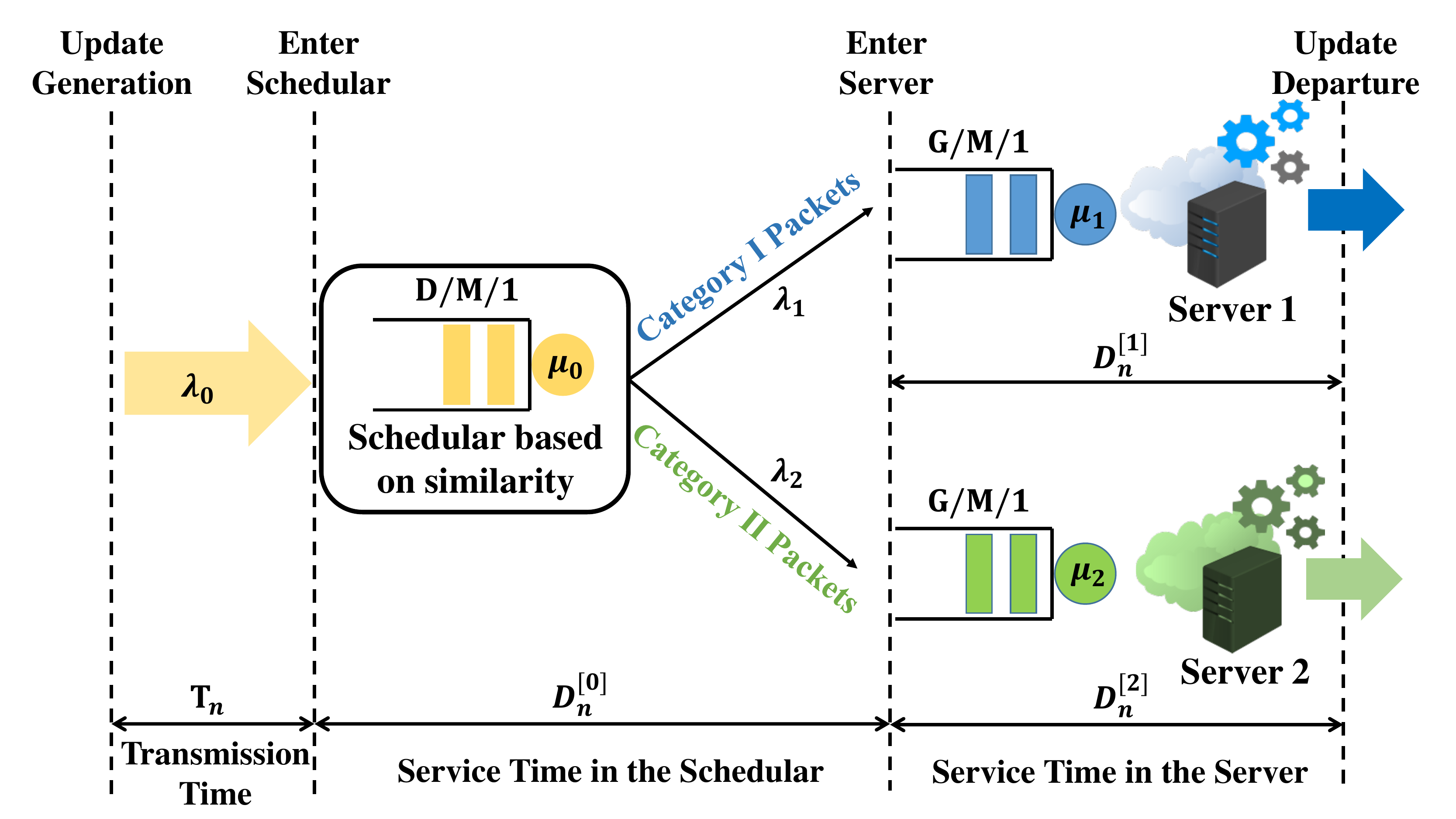}
  \caption{The process of the updates in the two-server BS queueing network.}
  \label{queue}
\end{figure}

\begin{thm}\label{thm1}
  {\color{black}{In G/M/1 queue model with the general distribution arrival interval $G(t)$ and exponential distributed service time $\frac{1}{\mu}$, the number of the packets waiting in the queue at the moment the $n$-th packet arrives, denoted by ${Q_{n}, n \geq 1}$ is a Markov process. Its state transition matrix is given by
  \begin{equation}\label{matrix_P}
    \begin{aligned}
  P=\left[ \begin{matrix}
    1-\theta _0&		\theta _0&	0&		0&	0&	\cdots\\
    1-(\theta _0+\theta _1)&		\theta _1&		\theta _0& 0&	0&	\cdots\\
    1-(\theta _0+\theta _1+\theta _2)&		\theta _2&		\theta _1&		\theta _0& 0&		\cdots\\
    1-(\theta _0+\theta _1+\theta _2+\theta _3)&		\theta _3&		\theta _2&		\theta _1&		\theta _0& \cdots\\
    \vdots & \vdots & \vdots & \vdots & \vdots & \ddots
  \end{matrix} \right] ,
  \end{aligned}
  \end{equation}
  where $\theta _i= \int_{0}^{\infty} e^{-\mu t} \frac{(\mu t)^i}{i!} \,dG(t) $.
  And this Markov process has a stable distribution if $\rho= \frac{\lambda}{\mu}<1, (\mathbb{E}[G]=\frac{1}{\lambda})$, which can be denoted by $\{\pi_{k},k\geq0\}$ and expressed as\cite{AOI}
  \begin{equation}
    \pi_{j}=(1-\eta )\eta^{j}, \ j \geq 0, \  \rho= \frac{\lambda}{\mu}<1.
  \end{equation}
  Note $\eta$ is the smallest root of $\eta = \mathcal{L} _{G} (\mu (1-\eta))$, where $\mathcal{L}(\cdot)$ denote the Laplace transform of the distribution of inter-arrival times.
  }}
\end{thm}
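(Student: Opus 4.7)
The plan is to proceed in three stages: establish the Markov property and the exact form of the one-step transition kernel, then solve the resulting stationary balance equations with a geometric ansatz, and finally characterize the solution of the fixed-point equation on $(0,1)$. Throughout I will use the convention that $Q_n$ denotes the number of packets already present in the system at the instant just before the $n$-th arrival, so that immediately after the arrival there are $Q_n+1$ packets, and I will write $S_{n+1}$ for the number of services completed during the inter-arrival interval $G_{n+1}$.

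First I would argue that $\{Q_n\}$ is Markov. Since the services are i.i.d. exponential with rate $\mu$, the inter-arrival times $G_n$ are i.i.d., and both sequences are independent of the past given $Q_n$, the recursion $Q_{n+1}=(Q_n+1-S_{n+1})^{+}$ depends on the past only through $Q_n$. Next I would compute the transition probabilities by conditioning on $G_{n+1}=t$. Because the exponential services are memoryless, the number of completions in an interval of length $t$ during which the server remains busy is Poisson with parameter $\mu t$; this completion process terminates once all $Q_n+1$ present customers have left. Hence, for $0\le k\le i$, the probability that exactly $k$ departures take place given $Q_n=i$ is $\theta_k=\int_0^\infty e^{-\mu t}(\mu t)^k/k!\,dG(t)$, while the probability that the server empties out equals $1-\sum_{k=0}^{i}\theta_k$. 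Matching this to $Q_{n+1}=i+1-k$ for $k\le i$ and $Q_{n+1}=0$ for $k\ge i+1$ reproduces row $i$ of the matrix $P$ displayed in the statement.

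Second I would attack the stationary equation $\pi=\pi P$ with the ansatz $\pi_j=(1-\eta)\eta^j$. For every $j\ge 1$ only the entries $P_{i,j}=\theta_{i+1-j}$ with $i\ge j-1$ contribute, so
\begin{equation}
\pi_j=\sum_{k\ge 0}\pi_{j-1+k}\,\theta_k=(1-\eta)\eta^{j-1}\sum_{k\ge 0}\eta^k\theta_k.
\end{equation}
Substituting the definition of $\theta_k$ and interchanging sum and integral (justified by non-negativity) collapses the inner series to $e^{\mu t\eta}$, so the equation reduces to
\begin{equation}
\eta=\int_0^\infty e^{-\mu(1-\eta)t}\,dG(t)=\mathcal{L}_G\bigl(\mu(1-\eta)\bigr),
\end{equation}
which is exactly the fixed-point equation claimed. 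The $j=0$ balance equation would then be verified as a consequence of $\sum_j\pi_j=1$ together with the stochastic rows of $P$, so no extra condition arises.

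Finally I would show existence and uniqueness of the desired root. Setting $\phi(\eta):=\mathcal{L}_G(\mu(1-\eta))$, one checks that $\phi$ is strictly convex and strictly increasing on $[0,1]$, with $\phi(0)=\mathcal{L}_G(\mu)\in(0,1)$ and $\phi(1)=1$; the derivative at the boundary is $\phi'(1)=\mu\,\mathbb{E}[G]=\mu/\lambda=1/\rho$. When $\rho<1$ we have $\phi'(1)>1$, so $\phi(\eta)<\eta$ in a left neighbourhood of $1$ while $\phi(0)>0$; by convexity this forces exactly one root of $\eta=\phi(\eta)$ in $(0,1)$, and it is the smallest root (since $\eta=1$ is the other, trivial solution). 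The main obstacle I anticipate is precisely this last convexity/monotonicity argument: showing that the equation has a solution strictly smaller than $1$ requires the subcritical condition $\rho<1$ to be used carefully, and verifying that the geometric distribution is indeed the unique stationary law (rather than just \emph{a} solution of the balance equations) would invoke positive recurrence of the chain, which in turn follows from $\rho<1$ via a standard Foster–Lyapunov argument applied to $V(i)=i$.
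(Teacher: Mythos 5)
Your proposal is correct, but note that the paper does not actually prove Theorem~1 at all: its ``proof'' is a one-line citation to an external reference, so there is nothing in the paper to compare against step by step. What you have written is the standard, self-contained embedded-Markov-chain derivation for the G/M/1 queue, and it is sound: the recursion $Q_{n+1}=(Q_n+1-S_{n+1})^{+}$ together with memorylessness of the exponential services correctly yields $P_{i,j}=\theta_{i+1-j}$ for $1\le j\le i+1$ and $P_{i,0}=1-\sum_{k=0}^{i}\theta_k$, matching the displayed matrix; the geometric ansatz collapses the balance equations for $j\ge 1$ to $\eta=\mathcal{L}_G(\mu(1-\eta))$ exactly as you compute, and recovering the $j=0$ equation from normalization plus stochasticity of the rows is the standard closing step. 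Your root analysis is also right: $\phi(\eta)=\mathcal{L}_G(\mu(1-\eta))$ is increasing and convex with $\phi(1)=1$ and $\phi'(1)=1/\rho>1$ when $\rho<1$, which forces a unique fixed point in $(0,1)$; two minor caveats are that strict convexity requires $G$ not to be degenerate at the origin (harmless here, since the paper's application is D/M/1 with a deterministic positive inter-arrival time, for which $\mathcal{L}_G(s)=e^{-s/\lambda}$ and the fixed-point equation becomes the Lambert-$\mathcal{W}$ expression used in Lemma~2), and that, as you yourself flag, promoting the geometric solution from ``a solution of the balance equations'' to ``the stationary law'' needs positive recurrence, which indeed follows from $\rho<1$. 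The value of your route is that it makes the theorem verifiable within the paper rather than outsourced; the cost is only the extra page of standard queueing-theory bookkeeping that the authors chose to elide.
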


{\color{black}{\begin{proof}
  See in \cite{Inoue2019}.{\hfill $\blacksquare$\par}
\end{proof}
}}

\begin{lemma}\label{lemma_schedular}
  {\color{black}{When the D/M/1 queue reaches stable state, the system delay time in the schedular is given by:}}
\begin{equation}\label{schedular_time}
  \mathbb{E}[D_{n}^{[0]}]=\frac{1}{\mu_0}+\frac{\eta_{0}}{\mu_{0}(1-\eta_{0})},
\end{equation}
where $\eta_{0}=-\rho_{0}\mathcal{W} (-\frac{1}{\rho_{0}}e^{-\frac{1}{\rho_{0}}})$ and $\rho_{0} = \frac{\lambda _{0}}{\mu_{0}}$. $\frac{1}{\lambda_{0}}$ and $\frac{1}{\mu_{0}}$ denote the update arrival intervals and the service time, respectively.
\end{lemma}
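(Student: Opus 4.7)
The plan is to apply Theorem~\ref{thm1} with the general inter-arrival distribution specialized to the deterministic case, then derive the mean waiting time from the embedded stationary distribution at arrival epochs, and finally add the mean service time. The scheduler has deterministic inter-arrival times $1/\lambda_0$ and exponential service times with rate $\mu_0$, so it is a D/M/1 queue to which Theorem~\ref{thm1} applies directly, provided the stability condition $\rho_0 = \lambda_0/\mu_0 < 1$ holds.

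First I would specialize the fixed-point equation $\eta = \mathcal{L}_G(\mu(1-\eta))$. For a deterministic inter-arrival time equal to $1/\lambda_0$, the Laplace transform is $\mathcal{L}_G(s) = e^{-s/\lambda_0}$, so the equation becomes
\begin{equation}
\eta_0 \;=\; \exp\!\bigl(-(1-\eta_0)/\rho_0\bigr).
\end{equation}
To put this in closed form, I would substitute $\eta_0 = -\rho_0 w$ with $w := \mathcal{W}(-\tfrac{1}{\rho_0}e^{-1/\rho_0})$. Using the defining identity $w e^{w} = -\tfrac{1}{\rho_0}e^{-1/\rho_0}$, I can solve for $e^{-w}$ and verify by direct substitution that $\exp(-(1-\eta_0)/\rho_0) = -\rho_0 w = \eta_0$, which confirms the Lambert-$\mathcal{W}$ expression for $\eta_0$ (taking the principal branch gives the smallest root in $(0,1)$ required by Theorem~\ref{thm1}).

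Next, I would compute the mean waiting time by conditioning on the number of packets $Q_n$ in the system seen by an arriving packet. Because service times are memoryless and exponential with rate $\mu_0$, given $Q_n = j$ the waiting time $W_n^{[0]}$ is the sum of $j$ i.i.d.\ $\mathrm{Exp}(\mu_0)$ completions, hence $\mathbb{E}[W_n^{[0]}\mid Q_n = j] = j/\mu_0$. Taking expectation over the geometric stationary law $\pi_j = (1-\eta_0)\eta_0^{j}$ from Theorem~\ref{thm1} yields
\begin{equation}
\mathbb{E}[W_n^{[0]}] \;=\; \frac{1}{\mu_0}\sum_{j\geq 0} j\,\pi_j \;=\; \frac{\eta_0}{\mu_0(1-\eta_0)}.
\end{equation}
Adding the mean service time $\mathbb{E}[H_n^{[0]}] = 1/\mu_0$ gives the claimed expression $\mathbb{E}[D_n^{[0]}] = 1/\mu_0 + \eta_0/(\mu_0(1-\eta_0))$.

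The main obstacle I anticipate is the algebraic step that identifies the scalar fixed point with the Lambert-$\mathcal{W}$ expression and justifies selecting the branch giving the smallest root in $(0,1)$; once that is done, the conditional-expectation argument and the geometric sum are routine. All remaining items (stability, memorylessness of residual service, applicability of Theorem~\ref{thm1}) are structural observations rather than computations.
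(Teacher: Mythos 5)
Your proposal is correct and follows essentially the same route as the paper's Appendix A: both invoke Theorem~\ref{thm1}'s geometric stationary distribution $\pi_j=(1-\eta_0)\eta_0^{j}$ at arrival epochs, condition on the number of packets found in the system, and decompose $\mathbb{E}[D_n^{[0]}]=\mathbb{E}[W_n^{[0]}]+\mathbb{E}[H_n^{[0]}]$. The only differences are cosmetic: the paper first derives the full waiting-time distribution $F_W(t)=1-\eta_0 e^{-\mu_0(1-\eta_0)t}$ and then integrates, whereas you go straight to the mean via $\mathbb{E}[W\mid Q_n=j]=j/\mu_0$ and a geometric sum, and you additionally verify the Lambert-$\mathcal{W}$ closed form for $\eta_0$, which the paper states without derivation.
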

{\color{black}{\begin{proof}
  See Appendix A.{\hfill $\blacksquare$\par}
\end{proof}
}}

\begin{lemma}\label{lemma_server}
  {\color{black}{When the G/M/1 queue reaches stable state, the system delay time in the server $i$ is given by:}}
\begin{equation}\label{schedular_time}
  \mathbb{E}[D_{n}^{[i]}]=\frac{1}{\mu_i}+\frac{\eta_{i}}{\mu_{i}(1-\eta_{i})}, i\in\{1,2\}.
\end{equation}
where $\eta_{i}=\mathcal{L} (\mu_{i}-\mu_{i}\eta_{i})$ and $\rho_{i} = \frac{\lambda _{i}}{\mu_{i}}$. $\mu_{i}$ denotes the service time of the server $i$.
Given by \textbf{Remark} \ref{remark2}, the average arrival interval time $\lambda_{i}$ at the server $i$ is given by
\begin{equation}
   \frac{1}{\lambda_{i}}=\begin{cases}
    \frac{1}{a}\mathbb{E}[D_{n}^{[0]}] & i=1,\\
    \frac{1}{b}\mathbb{E}[D_{n}^{[0]}] & i=2.
   \end{cases}
\end{equation}
\end{lemma}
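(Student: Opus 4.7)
The plan is to apply Theorem 1 directly to each of the two downstream servers and exploit the memoryless property of exponential service to convert the embedded queue-length distribution into an expected system delay, then tie the arrival rate $\lambda_i$ at server $i$ to the split of the scheduler's departure stream according to the long-run category proportions $a$ and $b$ from Remark 2.

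First, I would invoke Theorem 1 with service rate $\mu_i$ and inter-arrival distribution inherited from the scheduler's departure process, yielding the stationary geometric distribution $\pi_j=(1-\eta_i)\eta_i^j$ for the number of packets seen by an arriving customer, where $\eta_i$ is the smallest root in $(0,1)$ of $\eta_i=\mathcal{L}(\mu_i(1-\eta_i))$ under the stability condition $\rho_i=\lambda_i/\mu_i<1$. Conditional on the tagged $n$-th arrival finding $j$ packets already in the system, memorylessness of the residual service in progress makes the total system delay a sum of $j+1$ i.i.d.\ $\mathrm{Exp}(\mu_i)$ random variables, so $\mathbb{E}[D_n^{[i]}\mid Q_n=j]=(j+1)/\mu_i$. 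Averaging against $\pi_j$ and summing the arithmetic-geometric series gives
\[
\mathbb{E}[D_n^{[i]}]=\frac{1-\eta_i}{\mu_i}\sum_{j=0}^{\infty}(j+1)\eta_i^j=\frac{1}{\mu_i(1-\eta_i)}=\frac{1}{\mu_i}+\frac{\eta_i}{\mu_i(1-\eta_i)},
\]
which is the claimed closed form.

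For the arrival-interval expressions, I would invoke Remark 2: in steady state the scheduler's output process has a well-defined mean inter-departure interval, and each departing packet is independently classified as Category I or Category II with long-run frequencies $a$ and $b$ respectively. Thinning this stream by retaining only Category I (respectively Category II) packets rescales the mean inter-event interval by $1/a$ (respectively $1/b$), producing the stated formulas for $1/\lambda_1$ and $1/\lambda_2$ once the scheduler's mean inter-departure interval is identified with $\mathbb{E}[D_n^{[0]}]$ per the paper's convention.

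The main obstacle I expect is twofold: first, justifying that arrivals to server $i$ actually sample the embedded stationary queue-length distribution (a G/M/1 analogue of PASTA at arrival epochs, which follows from Theorem 1 once the thinned arrival process is shown to remain renewal-like with a well-defined Laplace-Stieltjes transform $\mathcal{L}$); and second, verifying that the fixed-point equation $\eta_i=\mathcal{L}(\mu_i(1-\eta_i))$ admits a unique solution in $(0,1)$, which is standard under $\rho_i<1$ via the convexity and monotonicity of $\mathcal{L}$ but should be noted explicitly for each server because $\mathcal{L}$ is only known implicitly through the scheduler's departure law.
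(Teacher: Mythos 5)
Your proposal is correct and follows essentially the same route as the paper's Appendix B: both rest on the stationary geometric distribution from Theorem 1 at arrival epochs, the memorylessness of the exponential service, and the splitting of the scheduler's departure stream by the proportions $a$ and $b$ to obtain $\lambda_1$ and $\lambda_2$. The only difference is cosmetic --- the paper first derives the full waiting-time distribution (an atom at zero plus an exponential tail) and then takes its mean, whereas you compute $\mathbb{E}[D_n^{[i]}]$ directly by averaging $(j+1)/\mu_i$ against $\pi_j$; the two calculations are equivalent.
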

{\color{black}{\begin{proof}
  See Appendix B.{\hfill $\blacksquare$\par}
\end{proof}
}}
According to \textbf{Lemma \ref{lemma_schedular}} , \textbf{Lemma \ref{lemma_server}}  and  (\ref{eq_aoi}), the average AoI of the Category I update and Category II update can be given by (\ref{eq:aoi}) (at the bottom of the next page). $\Phi _{n}$ and $S_{n}$ denote the $n$-th packet volume and the semantic rate. For simplicity, we assume that the packet volume $\Phi _{n}$ is adaptive according to the semantic rate to ensure the average transmission time $\mathbb{E}[\frac{\Phi_{n}}{S_{n}}]$ is constant over the observation time, which is denoted by $\varTheta \triangleq \mathbb{E}[\frac{\Phi_{n}}{S_{n}}]$.

\begin{figure*}[hb] 
	\hrulefill  
  \begin{equation}\label{eq:aoi}
  \begin{aligned}
    \overline{\Delta} _{AoI}^{[i]}&=\frac{1}{2\lambda_{0}}+\mathbb{E}[T_{n}]+\mathbb{E}[D_{n}^{[0]}]+\mathbb{E}[D_{n}^{[i]}]\\&=\begin{cases}
      \frac{1}{2\lambda_{0}}+\mathbb{E}[\frac{\Phi _{n}}{S_{n}}]+\frac{\mu_{0}+\mu_{1}}{\mu_{0}\mu_{1}}+\frac{-\rho_{0}\mathcal{W} (-\frac{1}{\rho_{0}}e^{-\frac{1}{\rho_{0}}})}{1+\rho_{0}\mathcal{W} (-\frac{1}{\rho_{0}}e^{-\frac{1}{\rho_{0}}})}+\frac{\eta_{1}}{\mu_{1}(1-\eta_{1})} & (i=1, \ \textup{Category I}),\\
      \frac{1}{2\lambda_{0}}+\mathbb{E}[\frac{\Phi _{n}}{S_{n}}]+\frac{\mu_{0}+\mu_{2}}{\mu_{0}\mu_{2}}+\frac{-\rho_{0}\mathcal{W} (-\frac{1}{\rho_{0}}e^{-\frac{1}{\rho_{0}}})}{1+\rho_{0}\mathcal{W} (-\frac{1}{\rho_{0}}e^{-\frac{1}{\rho_{0}}})}+\frac{\eta_{2}}{\mu_{2}(1-\eta_{2})} & (i=2, \ \textup{Category II}).
    \end{cases}
  \end{aligned}
\end{equation}
\end{figure*}

As illustrated in \textbf{Remark \ref{remark2}}, the proportions of the Category I updates and the Category II updates are $a$ and $b$ over the all updates. Hence the average AoI over the all updates during the observation time can be expressed as
\begin{equation}
  \overline{\Delta} _{AoI}=a\overline{\Delta} _{AoI}^{[1]}+b\overline{\Delta} _{AoI}^{[2]}.
\end{equation}

\subsection{AoII for Semantic Communications}
Since we have analyze the semantic error and AoI at the BS in the last section, we adopt (\ref{BERT}) and (\ref{AoI}) as the $g(\cdot, \cdot)$ and $f(\cdot)$ respectively\footnote{We assume that the DeepSC is able to extract and recover the semantic information perfectly and the mismatch between the transmitted and received signal is only effected by the unreliable channel.}. Hence the instantaneous AoII of the $k$-th user in our scenario can be presented as 
\begin{equation}\label{AoII}
  \begin{aligned}
  \Psi^{k}(t) &\triangleq \Delta _{AoII}(X_{t},\hat{X}_{t},t)\\&=(1-\psi_{n}^{k} (s,\hat{s})) \cdot A_{n}(t)\\&=(1-\frac{\mathbf{B}(s)\cdot \mathbf{B}(\hat{s})^T}{\|\mathbf{B}(s) \Vert \cdot \|\mathbf{B}(\hat{s}) \Vert})\cdot (t-\alpha_{n})\\&=(1-\xi_{n}^{k}(t))\cdot(t-\alpha_{n}),
  \end{aligned}
\end{equation}
Owing to the fact that the similarity of $n$-th packet is unchangeable during its transmission and only related to the SINR according to \textbf{Lemma \ref{lemma:semantic_rate}}, we rewrite the instantaneous AoII as
\begin{equation}
  \Psi _{n}^{k}= (1-\xi_{n}^{k}) \cdot A_{n}.
\end{equation}
Since $\xi_{n}^{k}$ and $A_{n}$ are irrelevant function, the average AoII of $k$-th user can be expressed as
\begin{equation}
  \begin{aligned}
    \mathbb{E} \left[\Psi _{n}^{k}\right]&=\mathbb{E}[1-\xi_{n}^{k}]\mathbb{E}[A_{n}]\\
    &=(a\overline{\Delta} _{AoI}^{[1]}+b\overline{\Delta} _{AoI}^{[2]})\mathbb{E}[1-A_{\varrho,1}-\frac{A_{\varrho,2}-A_{\varrho,1}}{1+e^{-(C_{\varrho,1}\gamma_{n}^{k}+C_{\varrho,2})}}].
  \end{aligned}
\end{equation}

\section{Problem Formulation}
In this section, we will formulate the problem of minimizing the average AoII of our system. According to the above analysis, the optimization problem can be formulated as
\begin{equation}
  \begin{aligned}
    (\textbf{P0}) \ \ &\mathop{\min}_{\Pi } \ \ \frac{1}{M}\sum_{k=1}^{M}\mathbb{E} [\Psi _{n}^{k}]\\
    \textup{s.t.} \ \ &\textup{C1}: S_{n}^{k} \geq S_{th},\\
    &\textup{C2}: \xi_{n}^{k} \geq \xi_{th}, \\
    &\textup{C3}: \varrho \in \{1,2,...,\aleph \},\\
    &\textup{C4}: 0 < p_{n}^{k} < p_{max},\\
    &\textup{C5}: 0< \frac{\lambda_{i}}{\mu_{i}}<1, \\
    &\textup{C6}: \mu_{min} \leq \mu_{0}, \mu_{1}, \mu_{2} \leq \mu_{max},    \\
    &\textup{C7}: \mu_{1} > \mu_{2},
  \end{aligned}
\end{equation}
where $\Pi=\{\{\mathbf{p}_{n}^{k}\}, \mu_{0}, \mu_{1}, \mu_{2}\}$ is the policy of the system. C1 and C2 are the semantic rate and similarity constraints, C3 limits the permitted range of the average number of semantic symbols per word, C4 denotes the range of transmit power, C5 is to ensure the queue can reach stable state, and C6, C7 restrict the range of the packet's generation time and service time. Further, we can simplify the average AoII function (\textbf{P0}) of the system as
\begin{equation}
  \overline{\Delta }_{AoII}\triangleq\frac{1}{M}\sum_{k=1}^{M}\mathbb{E} [\Psi _{n}^{k}]=\frac{\overline{\Delta}_{AoI}}{M}\sum_{k=1}^{M}\mathbb{E} [1-\xi_{n}^{k}].
\end{equation}
\par 
Since $\xi_{n}^{k}$ is only determined by the DeepSC model architecture and physical channel conditions, the parameters $A_{\varrho,1}$, $A_{\varrho,2}$, $C_{\varrho,1}$ and $C_{\varrho,2}$ are independent of the average number of semantic symbols per word $\varrho$. Given $\gamma_{n}^{k}$ and $\varrho$, we can calculate the $\xi_{n}^{k}$ easily based on \textbf{Lemma \ref{lemma:semantic_rate}}. Besides, according to (\ref{eq:aoi}) we find that the average AoI $\overline{\Delta} _{AoI}$ is also independent of the parameters $\mu_{0}$, $\mu_{1}$ and $\mu_{2}$. Owing to the orthogonality of the two expected value functions, (\textbf{P0}) can be decomposed into the following two equivalent independent optimization problems:
\begin{equation}
  \begin{aligned}\label{P1}
    (\textbf{P1}) \  &\mathop{\min}_{\{\mu_{0},\mu_{1},\mu_{2},\} } \ \overline{\Delta}_{AoI}\\
    \textup{s.t.} \ \ &\textup{C5},\ \textup{C6},\ \textup{C7},\\
  \end{aligned}
\end{equation}
and
\begin{equation}
  \begin{aligned}\label{P2}
    (\textbf{P2}) \ \  &\overline{\Delta}_{AoI}^{min} \cdot \mathop{\min}_{\{\mathbf{p}_{n}^{k}\} }\sum_{k=1}^{M}\mathbb{E} [1-\xi_{n}^{k}]\\
    \textup{s.t.} \ \ &\textup{C1},\ \textup{C2},\ \textup{C3},\ \textup{C4},\\
  \end{aligned}
\end{equation}
where $\overline{\Delta}_{AoI}^{min}$ denotes the minimum $\overline{\Delta}_{AoI}$ with respect to $\{\mu_{0},\mu_{1},\mu_{2}\}$.

\subsection{AoI-Optimal Policy for (\textbf{P1})}
We first consider the minimization problem (\textbf{P1}). The (\ref{P1}) can be written as
\begin{equation}\label{AoI_equation}
  \begin{aligned}
  \overline{\Delta}_{AoI}&=h(\mu_{0},\mu_{1},\mu_{2})=\frac{1}{2\lambda_{0}}+\varTheta +\frac{1}{\mu_{0}(1-\eta_{0})}\\&+\frac{a}{\mu_{1}(1-\eta_{1})}+\frac{b}{\mu_{2}(1-\eta_{2})}.
  \end{aligned}
\end{equation}

Based on \textbf{Remark \ref{remark2}} and \textbf{Lemma \ref{lemma_server}}, we know that the average values of the arrival intervals at the servers are determined by $\lambda_{0}$ and $\mu_{0}$. Without loss of generality, we consider the 
two distribution functions of the arrival intervals are deterministic distribution, which can be expressed by $f_{G^{[i]}}(t)=\delta (t-\frac{1}{\lambda_{i}})$. Hence we can obtain the value of $\lambda_{i}$ and $\eta_{i}$:
\begin{equation}\label{cases1}
\begin{cases}
  \lambda_{0}=\frac{1}{NT},\\
  \lambda_{1}=a\mu_{0}(1-\eta_{0}),\\
  \lambda_{2}=b\mu_{0}(1-\eta_{0}),
\end{cases}
\end{equation}
and
\begin{equation}\label{cases2}
  \begin{cases}
    \eta_{0}=-\frac{\lambda_{0}}{\mu_{0}}\mathcal{W} (-\frac{\mu_{0}}{\lambda_{0}}e^{-\frac{\mu_{0}}{\lambda_{0}}}),\\
    \eta_{1}=-\frac{a\mu_{0}(1-\eta_{0})}{\mu_{1}}\mathcal{W} (-\frac{\mu_{1}}{a\mu_{0}(1-\eta_{0})}e^{-\frac{\mu_{1}}{a\mu_{0}(1-\eta_{0})}}),\\
    \eta_{2}=-\frac{b\mu_{0}(1-\eta_{0})}{\mu_{2}}\mathcal{W} (-\frac{\mu_{2}}{b\mu_{0}(1-\eta_{0})}e^{-\frac{\mu_{2}}{b\mu_{0}(1-\eta_{0})}}).
  \end{cases}
\end{equation}
Based on \textbf{Lemma \ref{lemma4}}, the non-convex problem (\textbf{P1}) is converted to a convex one by fixing the service time $\mu_{0}$ of the schedular. Further, we apply the exact line
search based method to find the AoI-optimal policy, whose main procedure is depicted in \textbf{Algorithm \ref{alg1}}.

\begin{lemma}\label{lemma4}
In order to obtain the AoI-optimal policy, we try to fix the variable $\mu_{0}$, where $\mu_{0}$ satisfies the constraint C5. On the basis of the second-order condition, our original non-convex objective function can be converted to a convex function $\hat{h}(\mu_{1},\mu_{2})\triangleq h(\mu_{0}[k],\mu_{1},\mu_{2})$. 
\end{lemma}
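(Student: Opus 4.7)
The plan is to exploit separability and then verify the single-variable second-order condition twice. First I would observe that once $\mu_0$ is held fixed (at some iterate $\mu_0[k]$ satisfying C5), the first three terms of $h(\mu_0,\mu_1,\mu_2)$ in equation~(\ref{AoI_equation}) become constants: $\lambda_0$ is a system parameter, $\varTheta$ is the constant average transmission time, and $\eta_0$ is determined entirely by $\lambda_0/\mu_0$ via the first line of~(\ref{cases2}). Moreover, by~(\ref{cases1}), the arrival rates $\lambda_1=a\mu_0(1-\eta_0)$ and $\lambda_2=b\mu_0(1-\eta_0)$ at the two servers are also frozen. Hence
\begin{equation*}
\hat{h}(\mu_1,\mu_2) \;=\; K \;+\; g_1(\mu_1) \;+\; g_2(\mu_2),
\qquad
g_i(\mu_i) \;\triangleq\; \frac{\omega_i}{\mu_i\bigl(1-\eta_i(\mu_i)\bigr)},
\end{equation*}
with $\omega_1=a$, $\omega_2=b$, and $\eta_i$ depending only on $\mu_i$ (through a known constant $\lambda_i$) via the last two lines of~(\ref{cases2}).

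Because $\hat{h}$ is a separable sum, its Hessian is diagonal, so joint convexity in $(\mu_1,\mu_2)$ is equivalent to the two one-dimensional second-order conditions $g_i''(\mu_i)\ge 0$ for $i=1,2$. This is the substantive step. I would proceed by introducing the substitution $\eta_i = \mathcal{L}_{G^{[i]}}(\mu_i(1-\eta_i))$ from Theorem~\ref{thm1}, differentiate implicitly to get a closed-form expression for $\eta_i'(\mu_i)$ in terms of $\eta_i$, $\mu_i$, $\lambda_i$ and the Lambert-W derivative identity $\mathcal{W}'(z)=\mathcal{W}(z)/\bigl(z(1+\mathcal{W}(z))\bigr)$, and then differentiate again to obtain $\eta_i''(\mu_i)$. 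Substituting into
\begin{equation*}
g_i''(\mu_i) \;=\; \omega_i\,\frac{d^2}{d\mu_i^2}\!\left[\frac{1}{\mu_i(1-\eta_i)}\right],
\end{equation*}
yields, after simplification, a rational expression in $\mu_i,\lambda_i,\eta_i$ whose sign must be checked on the feasible region $0<\rho_i=\lambda_i/\mu_i<1$, equivalently $0<\eta_i<1$ by the fixed-point characterization.

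The main obstacle is precisely this last sign verification: $g_i(\mu_i)$ is the product of two factors ($1/\mu_i$ is convex, but $1/(1-\eta_i(\mu_i))$ involves the Lambert-W function and need not be convex a priori), so the second derivative is a sum of terms of mixed sign. My plan to push through is to factor out the positive quantity $1/\bigl(\mu_i^3(1-\eta_i)^3\bigr)$ and show that the remaining bracket is non-negative for all $\eta_i\in(0,1)$ by using the defining equation $\eta_i = \mathcal{W}(\cdot)$-style relation to eliminate $\eta_i''$ in favour of a polynomial inequality in $(\eta_i,\rho_i)$, which can then be verified by monotonicity of $\eta_i\mapsto \rho_i$ together with the constraints C5--C7. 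Once both $g_i''\ge 0$ are established, the diagonal Hessian is positive semidefinite on the feasible set, and $\hat{h}(\mu_1,\mu_2)$ is convex, completing the lemma and thereby justifying the outer exact-line-search over $\mu_0$ in Algorithm~\ref{alg1}.
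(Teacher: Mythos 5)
Your overall route is the same as the paper's: fix $\mu_0$, observe that the cross partials vanish so the Hessian of $\hat{h}(\mu_1,\mu_2)$ is diagonal, and reduce joint convexity to the two one-dimensional second-order conditions on the server terms. The quantities you call $g_1''$ and $g_2''$ are exactly the diagonal entries $Z_1$ and $Z_4$ that the paper computes, and your observation that the separable structure makes $Z_2=Z_3=0$ matches the paper's.

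There is, however, a gap at precisely the step you yourself flag as ``the substantive step'': the sign verification of $g_i''$ is never actually carried out. Everything after ``whose sign must be checked'' is a plan rather than a proof --- you say the remaining bracket ``can then be verified'' via a polynomial inequality in $(\eta_i,\rho_i)$, but you neither state that inequality nor verify it. Since the separability and the vanishing of the cross partials are immediate, this positivity check is the entire content of the lemma, and without it convexity is not established. In fairness, the paper's own proof has the same hole, and arguably a worse one: it computes $Z_1$ and $Z_4$, asserts without argument that they ``are both positive or negative at the same time,'' concludes that the determinant $Z_1Z_4-Z_2Z_3$ is positive, and declares convexity. A positive determinant of a diagonal $2\times 2$ Hessian is equally consistent with both entries being negative, i.e.\ with $\hat{h}$ being concave, so the published argument does not distinguish convexity from concavity either. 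A complete proof requires showing explicitly that each $g_i''>0$ on the feasible region $0<\rho_i<1$ (equivalently $0<\eta_i<1$), for instance by the Lambert-$\mathcal{W}$ implicit-differentiation elimination you sketch; that computation is what is missing from both your proposal and the paper.
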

\begin{proof}
  According to (\ref{cases1}) and (\ref{cases2}), we first derive the second order partial derivative of the $\hat{h}(\mu_{1},\mu_{2})\triangleq h(\mu_{0}[k],\mu_{1},\mu_{2})$:
  \begin{equation}
    \begin{aligned}
  Z _{1}&\triangleq\frac{\partial^{2}\hat{h}}{\partial \mu_{1}^{2}}=a(\frac{\frac{\partial^{2}\eta_{1}}{\partial\mu_{1}^{2}}}{\mu_{1}(1-\eta_{1})^{2}}+\frac{2(\frac{\partial\eta_{1}}{\partial\mu_{1}})^{2}}{\mu_{1}(1-\eta_{1})^{3}}\\&+\frac{2}{\mu_{1}^{3}(1-\eta_{1})}-\frac{2\frac{\partial\eta_{1}}{\partial\mu_{1}}}{\mu_{1}^{2}(1-\eta_{1})^{2}}),
    \end{aligned}
  \end{equation}
  \begin{equation}
  Z _{2}\triangleq\frac{\partial^{2}\hat{h}}{\partial \mu_{1}\partial \mu_{2}}=0,
  \end{equation}
  \begin{equation}
  Z _{3}\triangleq\frac{\partial^{2}\hat{h}}{\partial \mu_{2}\partial \mu_{1}}=0,
    \end{equation}
  \begin{equation}
    \begin{aligned}
     Z _{4}&\triangleq\frac{\partial^{2}\hat{h}}{\partial \mu_{2}^{2}}=b(\frac{\frac{\partial^{2}\eta_{2}}{\partial\mu_{2}^{2}}}{\mu_{2}(1-\eta_{2})^{2}}+\frac{2(\frac{\partial\eta_{2}}{\partial\mu_{2}})^{2}}{\mu_{2}(1-\eta_{2})^{3}}\\&+\frac{2}{\mu_{2}^{3}(1-\eta_{2})}-\frac{2\frac{\partial\eta_{2}}{\partial\mu_{2}}}{\mu_{2}^{2}(1-\eta_{2})^{2}}).  
    \end{aligned}
  \end{equation}
Thus, the Hessian matrix of the $\hat{h}(\mu_{1},\mu_{2})$ is
  \begin{equation}\label{matrix_H}
    \begin{aligned}
  H&=\left[ \begin{matrix}
    Z_{1}& Z_{2}\\
    Z_{3}& Z_{4}\\
  \end{matrix} \right].
  \end{aligned}
  \end{equation}
From the Hessian matrix, we can see that $Z_{1}$ and $Z_{4}$ are both positive or negative at the same time, which means the minor sequence $|Z_{1}Z_{4}-Z_{2}Z_{3}|>0$. Therefore, the optimization problem (\textbf{P1}) can be converted to a convex function with a fixed $\mu_{0}$.{\hfill $\blacksquare$\par}
\end{proof}

\renewcommand{\algorithmicrequire}{\textbf{Input:}}
\renewcommand{\algorithmicensure}{\textbf{Output:}}
\begin{algorithm}[t]
   \caption{AoI-optimal Exact Linear Search Method}
   \label{alg1}
   \begin{algorithmic}[1]
     \Require $\lambda_{0}$, $\varTheta $, range of $\mu_{0},\mu_{1},\mu_{2} \in [\mu_{min},\mu_{max}]$.
     \State Initialize and freeze $\mu_{0}[1]=\mu_{min}$
     \State Initialize the maximum number of iterations to $\mathcal{Q}$ and the iteration indicator $\kappa:=0$.
     \State Determine the step size $\epsilon=(\mu_{max}-\mu_{min})/\mathcal{Q}$ 
     \For{Iteration = $0,...,\mathcal{Q}$} 
        \State Solve the converted convex function and find the optimal policy $\{\mu_{1}[\kappa],\mu_{2}[\kappa]\}$.
        \State Memorize $\{\overline{\Delta }_{AoI}[\kappa], \mu_{0}[\kappa],\mu_{1}[\kappa],\mu_{2}[\kappa]\}$
        \State Update $\mu_{0}[\kappa+1]=\mu_{0}[\kappa]+\epsilon$ and $\kappa:=\kappa+1$.
     \EndFor 
     \State Compare all the memorized $\overline{\Delta }_{AoI}$ and select the value of $\kappa$ when $\overline{\Delta }_{AoI}[\kappa]$ is the minimum.
     \State Assign $\mu_{0}=\mu_{0}[\kappa], \mu_{1}=\mu_{1}[\kappa], \mu_{2}=\mu_{2}[\kappa]$ and $\overline{\Delta }_{AoI}=\overline{\Delta }_{AoI}[\kappa]$.
     \Ensure The AoI-optimal policy $\{\mu_{0},\mu_{1},\mu_{2}\}$ and the minimum $\overline{\Delta }_{AoI}$.
   \end{algorithmic}
\end{algorithm}

\subsection{Similarity-optimal policy for (\textbf{P2})}
The optimization problem (\ref{P2}) can be converted into 
\begin{equation}
  \begin{aligned}\label{P3}
    (\textbf{P3}) \ \  &\mathop{\max}_{\{\mathbf{p}_{n}^{k}\} }\sum_{k=1}^{M}\mathbb{E} [\xi_{n}^{k}]\\
    \textup{s.t.} \ \ &\textup{C1},\ \textup{C2},\ \textup{C3},\ \textup{C4},\\
  \end{aligned}
\end{equation}
As we have illustrated in Section II, the channel coefficients of the users satisfy $h_{n}^{1}\geq h_{n}^{2}\geq \cdots \geq h_{n}^{k}\geq \cdots \geq h_{n}^{M}$. We expect that the SIC order is from user 1 to user M, which demands that the transmit power should satisfy $p_{n}^{1}\geq p_{n}^{2}\geq \cdots \geq p_{n}^{k}\geq \cdots \geq p_{n}^{M}$. From (\ref{SINR}), we could easily find that the SINR $\gamma_{n}^{k}$ monotonically increases with $p_{n}^{k}$. Moreover, we find that the similarity $\xi_{n}^{k}$ also monotonically increases with $\gamma_{n}^{k}$ by calculating the derivative of (\ref{semantic_rate2}) which is written as (\ref{differential}):
\begin{equation}\label{differential}
  \frac{d\xi_{n}^{k}}{d\gamma_{n}^{k}}=\frac{C_{\varrho ,1}(A_{\varrho ,2}-A_{\varrho ,1})e^{-(C_{\varrho ,1}\gamma_{n}^{k}+C_{\varrho ,2})}}{[1-e^{-(C_{\varrho ,1}\gamma_{n}^{k}+C_{\varrho ,2})}]^{2}}>0.
\end{equation}
With a fixed average number of semantic symbols $\varrho $ used for each word , $S_{n}^{k}$ monotonically increases with $\xi_{n}^{k}$. Above all, $\gamma_{n}^{k}$, $\xi_{n}^{k}$ and $S_{n}^{k}$ all monotonically increase with $p_{n}^{k}$. Thus, in order to obtain the maximum similarity, the transmit power of all the users should be $p_{max}$. And the minimum semantic rate and similarity should satisfy the constraints C1 and C2, which can be expressed by
\begin{equation}
  \begin{aligned}
  &min\{S_{n}^{1}, S_{n}^{2}...,S_{n}^{M}\}\geq S_{th},\\
  &min\{\xi_{n}^{1}, \xi_{n}^{2}..., \xi_{n}^{M}\}\geq \xi_{th}.
\end{aligned}
\end{equation}
Then the (\ref{P3}) equals
\begin{equation}
  \mathop{\max}_{\{\mathbf{p}_{n}^{k}\} }\sum_{k=1}^{M}\mathbb{E} [\xi_{n}^{k}]=\sum_{k=1}^{M}[A_{\varrho,1}+\frac{A_{\varrho,2}-A_{\varrho,1}}{1+e^{-(C_{\varrho,1}\gamma^{k}+C_{\varrho,2})}}].
\end{equation}

\section{Simulation Results}
This section presents numerical results to demonstrate the AoII performance of our proposed multi-user semantic communications in uplink NOMA scenario. As shown in Table \ref{parameters}, we list the parameters applied in the simulations. We consider that there are 6 users equipped with XR devices transmitting their newest semantic updates to the BS via NOMA.  
\begin{table}[t]
  \caption{{\color{black}{System Parameters Setup}}}
  \label{parameters}
  \begin{center}
\begin{tabular}{p{5cm} m{3cm}<{\centering}}
  \toprule
  \textbf{Parameters} & \textbf{Values} \\
  \midrule
  Number of users ($M$) & 6\\
  Time duration of one frame ($NT$) & $100\cdot0.001$s\\
  Channel Bandwidth ($W$) & 200 KHz\\
  Maximum transmit power ($p_{max}$) & 10 dBm\\
  Noise power ($\sigma_{2}$) & -30dBm\\
  Average transmission time ($\varTheta $) & 0.1s\\
  Average number of symbols per word ($\varrho $) & 20 symbols/word\\
  Semantic similarity threshold ($\xi_{th}$) & 0.3\\
  Semantic rate threshold ($S_{th}$) & $10^{5}$ ($I/L$) suts/s/Hz\\
  Proportion of the Category I ($a$) & [0.1,0.9]\\
  Range of $\mu_{0}$ & [15,20]\\
  Range of $\mu_{1}$ & [10,15]\\
  Range of $\mu_{2}$ & [5,10]\\
  \bottomrule
\end{tabular}
\end{center}
\end{table}

\subsection{Error-Based Performance}
As for semantic similarity, we set the channel coefficients $\{h^{1},h^{2},h^{3},h^{4},h^{5},h^{6}\}$ of the users as linearly spaced values in the range of $[0.8,0.9]$. Among the six users, they will be decoded in order via SIC. Fig. \ref{similarity_p} demonstrates the average semantic rate of the users versus different transmit power. From this figure, we can find that the semantic rate does not increase monotonically with transmit power, which is different from the traditional bit communications. The reason for this is that the semantic rate is also related to the average number of symbols per word besides transmit power. For instance, the semantic rate of $\varrho =8$ decreases when the transmit power increases from 0.1mW to 1mW. Fixing $\varrho = 20$, we can obtain the semantic similarity of different users in the NOMA system. Under the parameter settings of the Table \ref{parameters}, the first decoded user will get a larger similarity and it increases quickly with transmit power and then tend to be stable. The two figures only consider the error-based performance of our system and next we will analyze the freshness of the transmitted information.

\begin{figure}[t]
  \centering
  \includegraphics[width=0.46\textwidth]{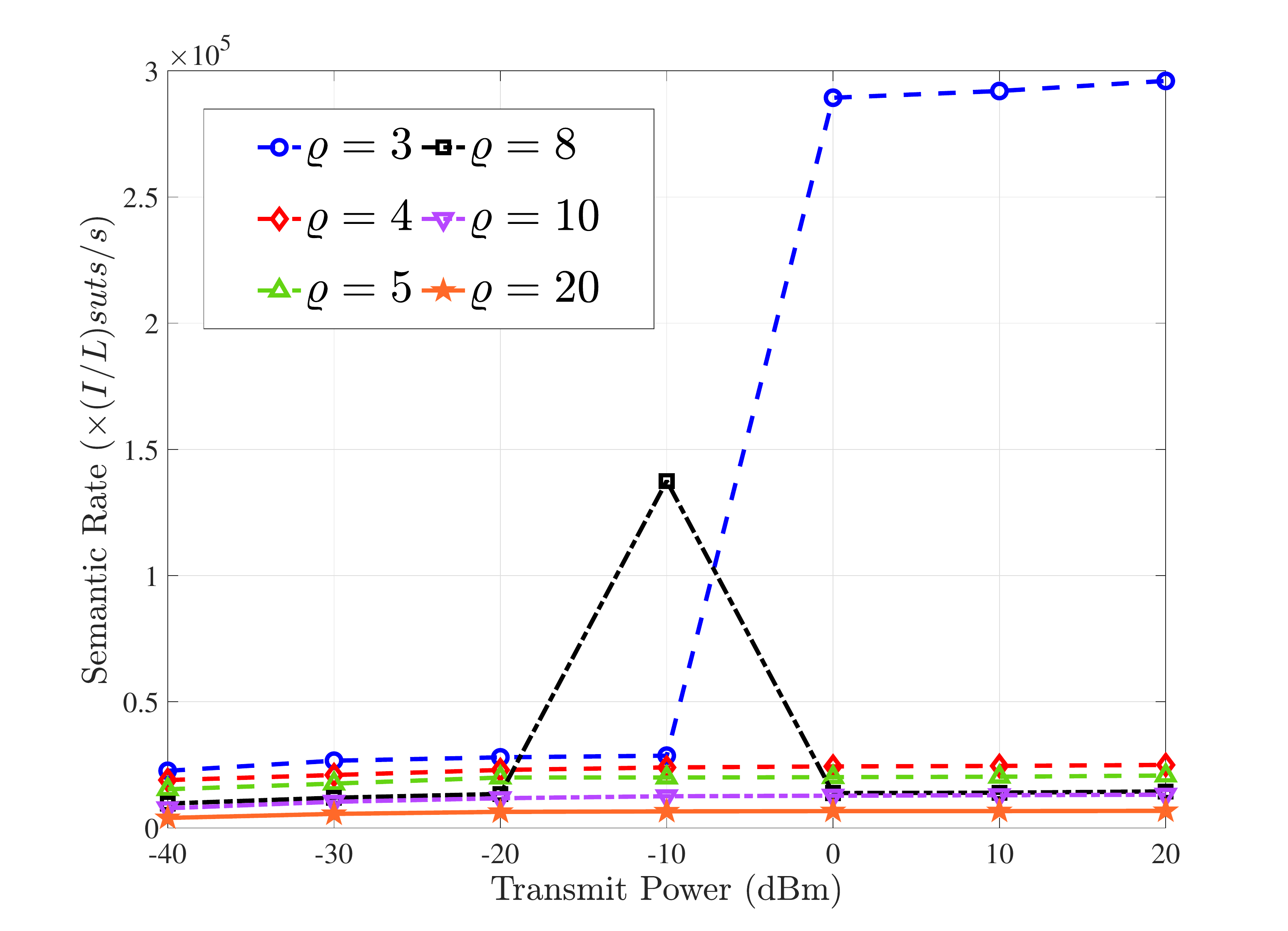}
  \caption{Semantic rate versus different average number of semantic symbols used for each word.}
  \label{similarity_p}
\end{figure}

\begin{figure}[t]
  \centering
  \includegraphics[width=0.46\textwidth]{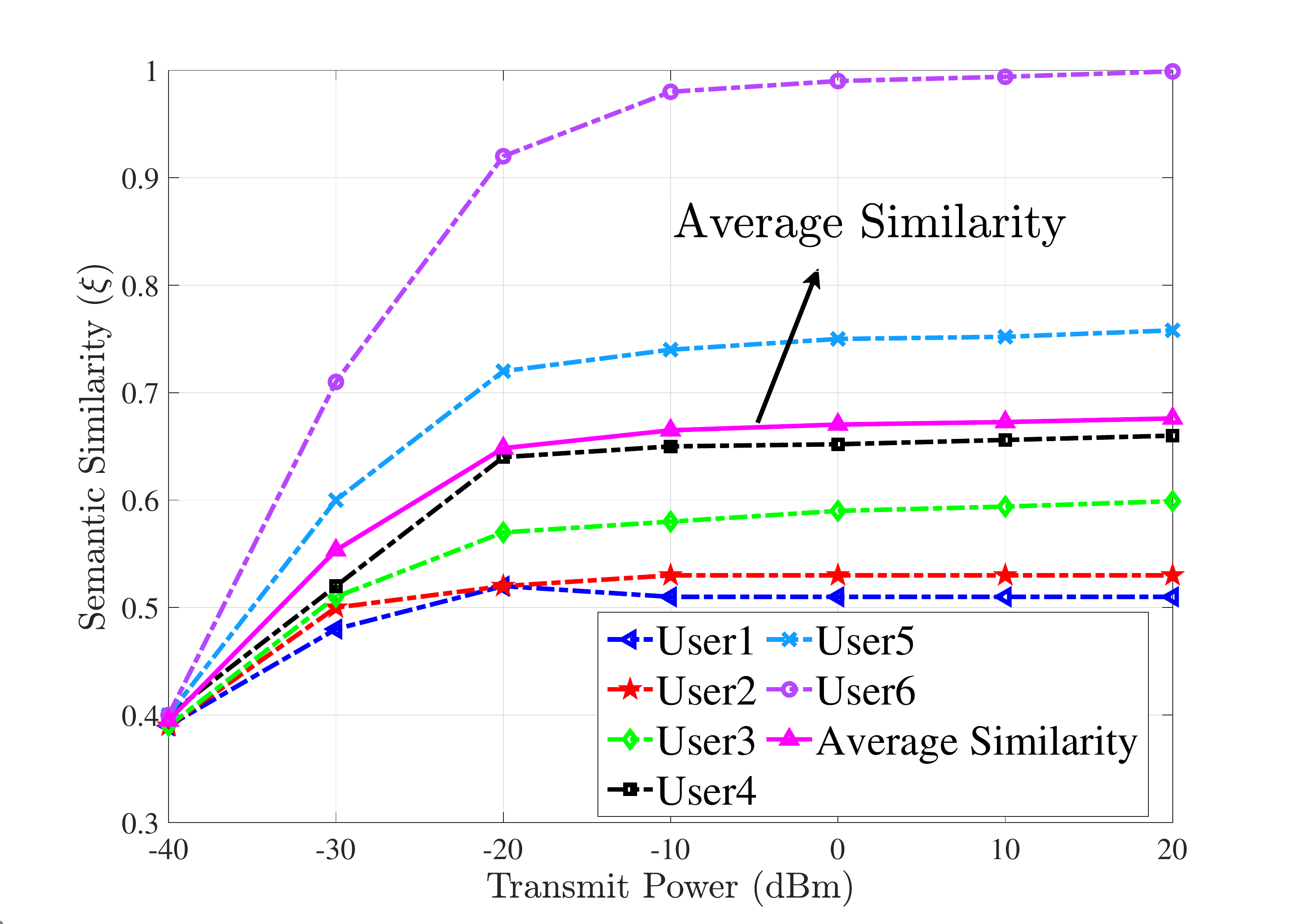}
  \caption{Semantic similarity versus different transmit power.}
  \label{similarity_p}
\end{figure}

\subsection{AoI-Based Performance}
Since the system delay is essential to calculate the average AoI, we demonstrate the different components' performance of system delay with respect to their own $\mu$ in Fig. \ref{system_delay}. As shown in the figure, each system delay of the components decreases with service rate $\mu$. It is because that larger $\mu$ will make the waiting time shorter. And the increasing of $\mu$ has a greater influence on the queue in the schedular. The reason lies in that the arrival rate $\lambda_{0}$ of the schedular is larger than those of server 1 and server 2, which results in the same increase on $\mu$ has different effects on the system delay. And Fig. \ref{averageAoI_lambda} illustrates the average AoI range versus different proportions of Category I packets with different $\lambda_{0}$. We can see that the smaller $\lambda$ will cause the packet out of date easier. This is because the generation intervals account for a large part of the AoI when $\lambda_{0}$ is small, where the effect of generation time is much more than that of system delay. Moreover, with the increase of $a$, more packets get into the server 1 with a larger $\mu_{1}$ rather than server 2, and then the total system delay will decrease. But the service ability of server 1 is limited after all. The average waiting time in the server 1 will increase when $a>0.6$, which results in the increase of the average of AoI. From the two figures, we conclude that the average AoI is not only determined by the arrival intervals or service time of the packets, but also related to the packets assignment.

\begin{figure}[t]
  \centering
  \includegraphics[width=0.46\textwidth]{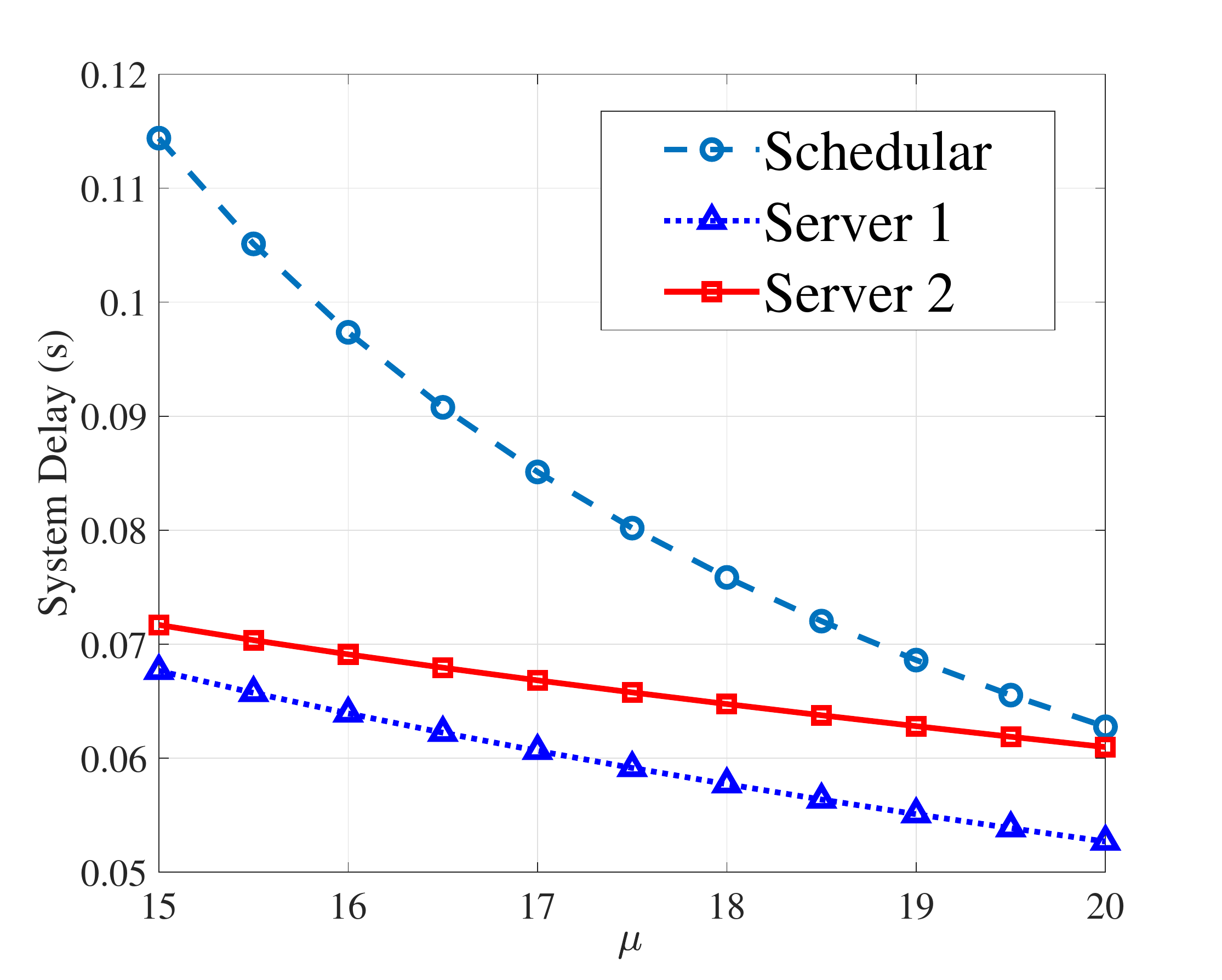}
  \caption{System delay of different components with respect to $\mu$.}
  \label{system_delay}
\end{figure}

\begin{figure}[t]
  \centering
  \includegraphics[width=0.46\textwidth]{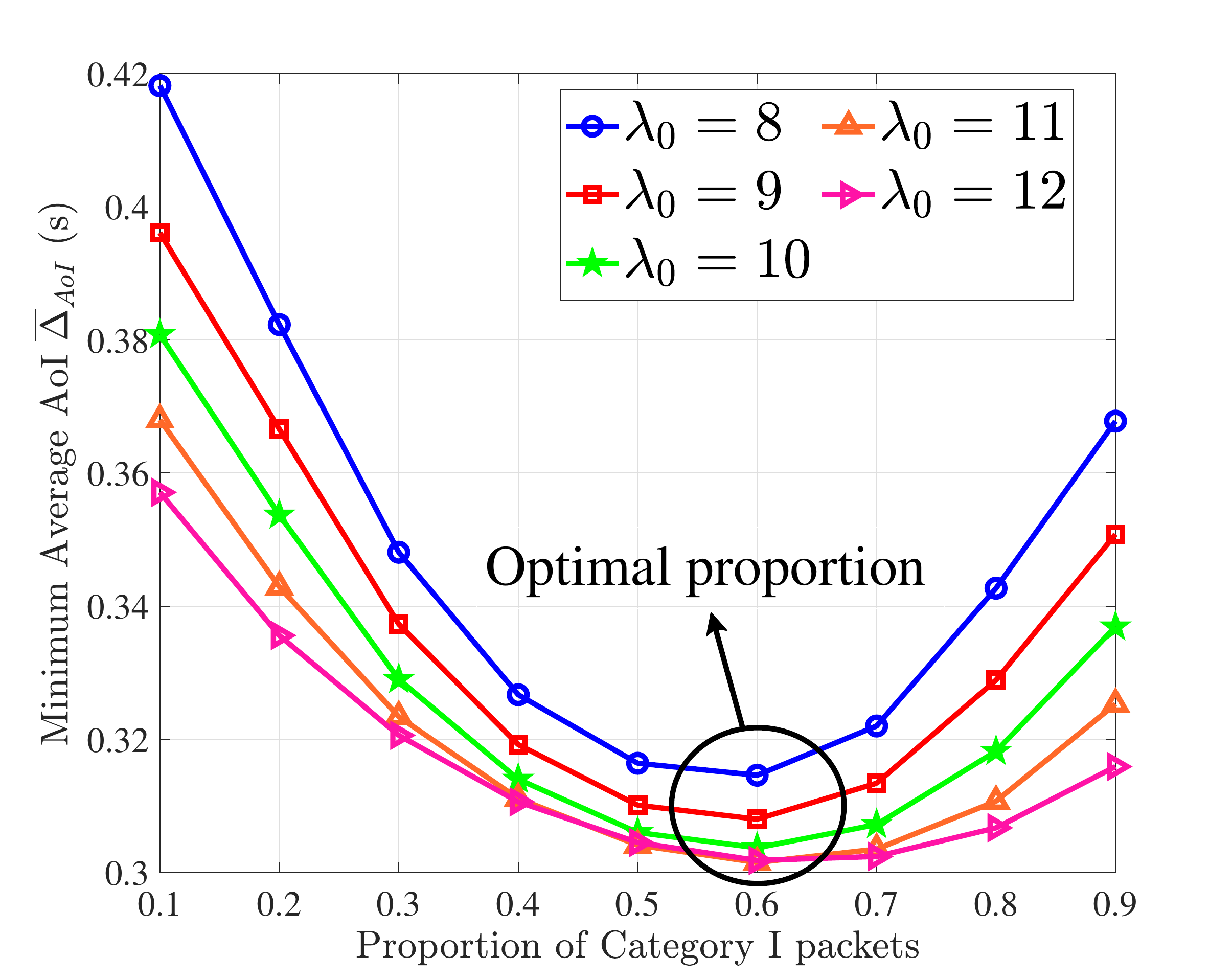}
  \caption{Minimum average AoI versus different proportions of Category I packets with different $\lambda_{0}$.}
  \label{averageAoI_lambda}
\end{figure}

\subsection{AoII Performance}
In this part, we combine the error-based and AoI-based performance and analyze the AoII performance of our system. In Fig. \ref{AoII_p_lambda}, we evaluate the AoII versus different transmit power with different $\lambda_{0}$. Observe that upon increasing of $p$ without changing $\lambda_{0}$, the minimum average AoII decreases. The reason for this trend is that the increase of transmit power will help improve the semantic similarity, hence the AoII will increase. However, the effect of $\lambda_{0}$ is not monotonous on the AoII. As depicted in Fig. \ref{AoII_p_lambda} by dashed arrows, the increase of $\lambda$ will first help improve the AoII and then cause negative impact on it. Just as we have analyzed before, the increase of $\lambda$ at first will decrease the generation time, but continuous rise of it will break the queue balance and increase the burden on the server. In the Fig. \ref{AoII_p_a}, we consider the proportion of packets with different priorities. As we can see, the increase of transmit power at first will significantly reduce the AoII, but it will converge because of the similarity has been close to 1. Also, the influence of $a$'s increase on AoII is complex. At first, AoII can even be decreased by $10\%$ through increasing $a$. However, owing to the continuous increase of $a$ will overload the server 1, AoII of our system will increase instead.

\begin{figure}[t]
  \centering
  \includegraphics[width=0.46\textwidth]{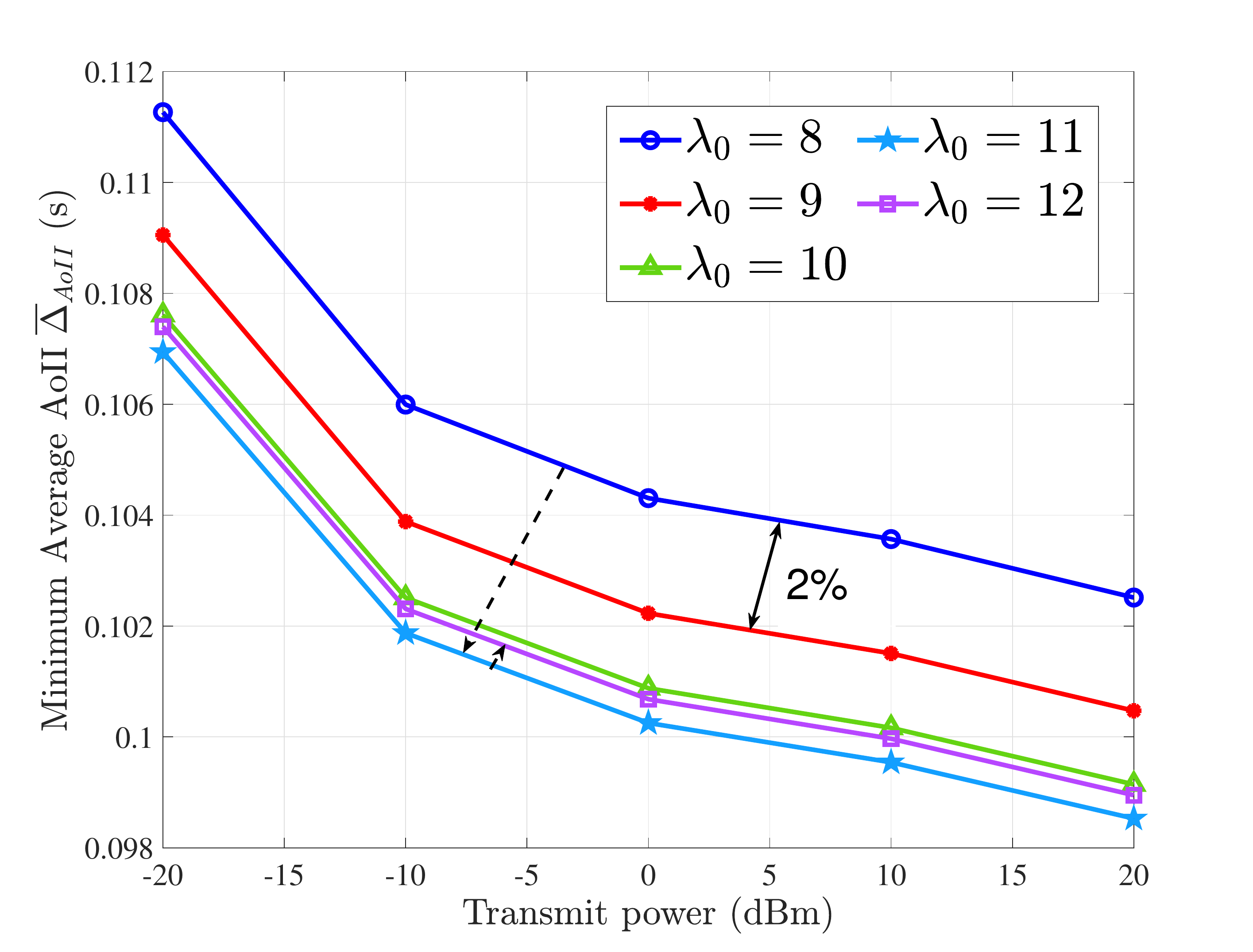}
  \caption{Minimum average AoII versus different transmit power with different $\lambda_{0}$.}
  \label{AoII_p_lambda}
\end{figure}

\begin{figure}[t]
  \centering
  \includegraphics[width=0.46\textwidth]{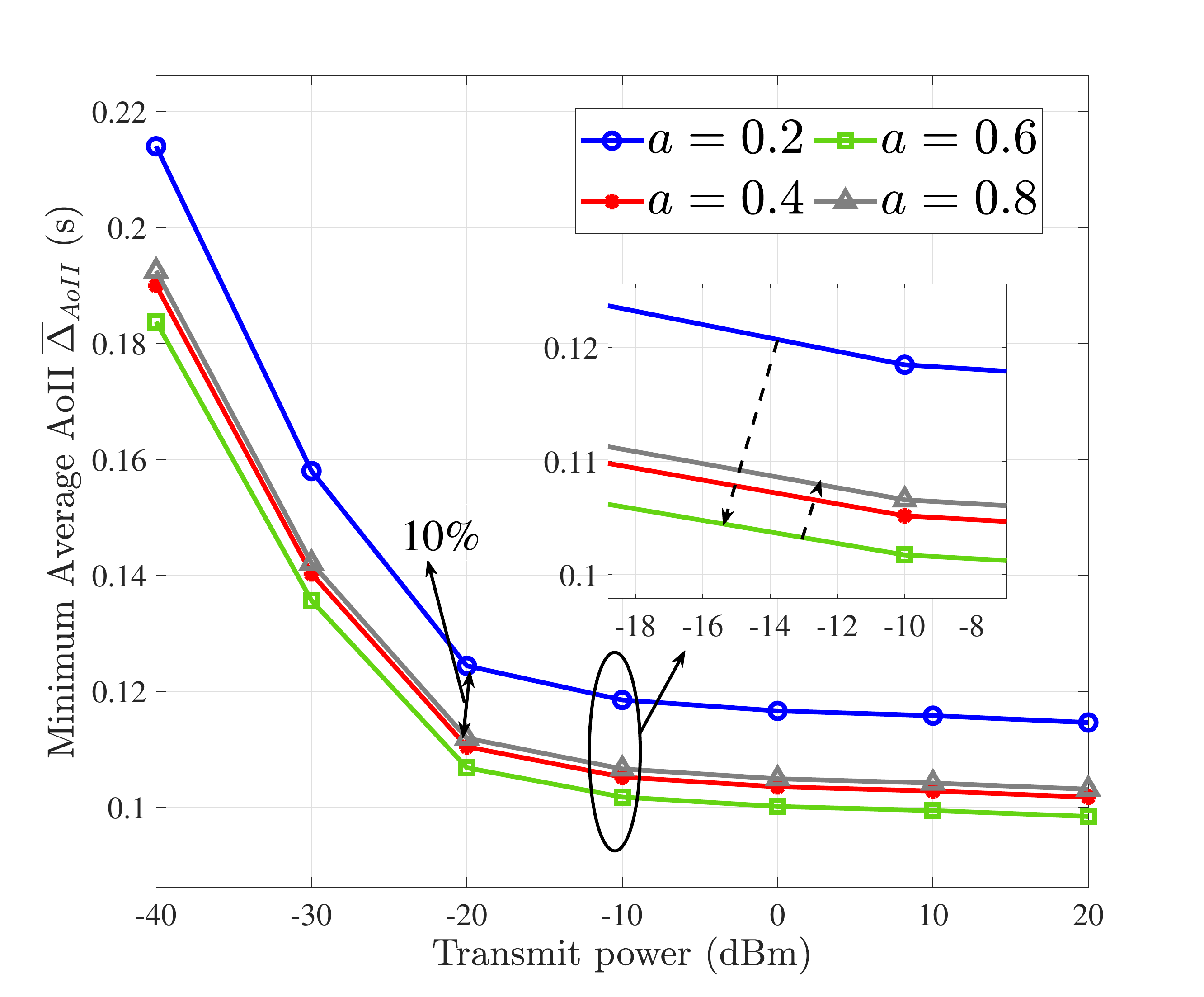}
  \caption{Minimum average AoII versus different transmit power with different proportion $a$.}
  \label{AoII_p_a}
\end{figure}

\section{Conclusions}
Implementing the XR applications needs flawlessly yet efficiently transmitting and processing an unprecedented amount of heterogeneous multi-modal and interference-contaminated data while supporting billions of users. However, neither error-based metrics nor newly proposed AoI-based metrics can cope with the upcoming challenges. Aiming to broaden a new path for evaluating the XR communications, we have constructed a multi-user uplink NOMA system for XR communications. While using NOMA to reduce the spectral resource, we also apply semantic communication to achieve XR applications. In our system, we harness AoII as the metric and simultaneously analyzed the semantic similarity and AoI performance. Moreover, we have formulated the non-convex problems of minimizing the average AoII and solved the equivalent convex problems by an exact linear search method. Our simulation results have shown that the AoII metric can capture all the transmission features especially for semantic communication and evaluate the transmission performance from both error-based and AoI-based perspectives.

\begin{appendices}

\section{Proof of Lemma 2}
\begin{proof}
In D/M/1 system, let $W$ denote the waiting time of the packet and let $\Upsilon $ denote the number of the packets in the queue when the newest packet arrives. Based on \textbf{Theorem \ref{thm1}}, we can first calculate $Pr\{W=0\}$:
\begin{equation}
  Pr\{W=0\}=1-\sum_{j=1}^{\infty}\pi_{j}=1-\eta_{0}.
\end{equation}
Then the $Pr\{W<t\}$ is given by
\begin{equation}
  \begin{aligned}
  Pr\{W<t\}&=Pr\{W=0\}+Pr\{0<W<t\}\\
  &=1-\eta_{0}+\sum_{j=1}^{\infty}\pi_{j}Pr\{0<W<t|\Upsilon=j\}\\
  &=1-\eta_{0}+\sum_{j=1}^{\infty}(1-\eta_{0})\eta_{0}^{j}Pr\{0<\sum_{i=1}^{j}\theta_{i}<t\}\\
  &=1-\eta_{0}+(1-\eta_{0})\eta_{0}\int_{0}^{t}e^{-\mu_{0} x}\cdot \mu_{0} e^{\mu_{0} \eta_{0}x}dx\\
  &=1-\eta_{0}e^{-\mu_{0}(1-\eta_{0})t}
  \end{aligned}
\end{equation}
Above all, the distribution function of the waiting time $W$ is given by
\begin{equation}\label{F_W_schedular}
  F_{W}(t)=\begin{cases}0 & t<0,\\
    1-\eta_{0}e^{-\mu_{0}(1-\eta_{0})t} &  t \geq 0.
  \end{cases}
\end{equation}
Thus, the probability density function (p.d.f) of the waiting time $W$ is given by
\begin{equation}\label{f_W_schedular}
  f_{W}(t)=\begin{cases}0 & t<0,\\
    (1-\eta_{0})\delta (t)+\mu_{0}\eta_{0}(1-\eta_{0})e^{-\mu_{0}(1-\eta_{0})t} &  t \geq 0.
  \end{cases}
\end{equation}
Then we can calculate $\mathbb{E}[W]=\frac{\eta_{0}}{\mu_{0}(1-\eta_{0})}$ and $\mathbb{D}[W]=\frac{\eta_{0}}{\mu_{0}^{2}(1-\eta_{0})^{2}}$ based on (\ref{f_W_schedular}). As we mentioned before, the system delay $D$ is the sum of the waiting time $W$ and the service time $H$, which is given by
\begin{equation}
  \begin{aligned}
    \mathbb{E}[D]&=\mathbb{E}[W]+\mathbb{E}[H]\\
    &=\frac{\eta_{0}}{\mu_{0}(1-\eta_{0})}+\frac{1}{\mu_{0}}.
  \end{aligned}
\end{equation}{\hfill $\blacksquare$\par}
\end{proof}

\section{Proof of Lemma 3}
\begin{proof}
When the G/M/1 queue reaches stable state, the packet departure intervals from the schedular is the arrival intervals to the server. According to \textbf{Lemma \ref{lemma_server}}, the average time of the packet departure interval is $\mathbb{E}[D_{n}^{[0]}]$. With the assumption that the proportions of the Category I updates and the Category II updates are $a$ and $b$ over the all updates, we can get the average arrival intervals $\mathbb{E}[G_{n}^{[1]}]$ and $\mathbb{E}[G_{n}^{[2]}]$ of the two kinds of packets:
\begin{equation}
   \frac{1}{\lambda_{1}}\triangleq \mathbb{E}[G_{n}^{[1]}]=\frac{\mathbb{E}[D_{n}^{[0]}]}{a},
\end{equation}
and 
\begin{equation}
  \frac{1}{\lambda_{2}}\triangleq \mathbb{E}[G_{n}^{[2]}]=\frac{\mathbb{E}[D_{n}^{[0]}]}{b}.
\end{equation}
Similar to the proof of \textbf{Lemma \ref{lemma_server}}, the p.d.f of the waiting time can be expressed as 
\begin{equation}
  f_{W_{n}^{[i]}}(t)=\begin{cases}0 & t<0,\\
    (1-\eta_{i})\delta (t)+\mu_{i}\eta_{i}(1-\eta_{i})e^{-\mu_{i}(1-\eta_{i})t} &  t \geq 0,
  \end{cases}
\end{equation}
where $i\in\{1,2\}$ denotes the server 1 and server 2, respectively. Thus, the average system delay in the server is given by
\begin{equation}
  \mathbb{E}[D_{n}^{[i]}]=\frac{1}{\mu_i}+\frac{\eta_{i}}{\mu_{i}(1-\eta_{i})}, i\in\{1,2\}.
\end{equation}{\hfill $\blacksquare$\par}
\end{proof}
\end{appendices}

\bibliographystyle{IEEEtran}
\normalem
\bibliography{ref-JSTSP}

\begin{IEEEbiography}[{\includegraphics[width=1in,height=1.33in]{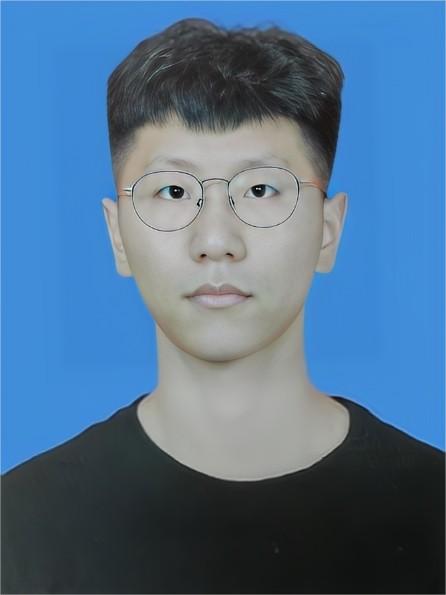}}]{\textbf{Jianrui Chen}} (Student Member, IEEE) received his B.S. degree in electronics and information engineering from the Jilin University, Changchun, China in 2019, and the M.S. degree in electronic and information engineering from Tsinghua University, Beijing, China in 2023. He is currently pursuing the Ph.D. degree in the School of Cyber Science and Technology from Beihang University, Beijing, China. His research interests lie in the area of wireless communications and security.
\end{IEEEbiography}

\vspace{-10mm}
\begin{IEEEbiography}[{\includegraphics[width=1in,height=1.33in]{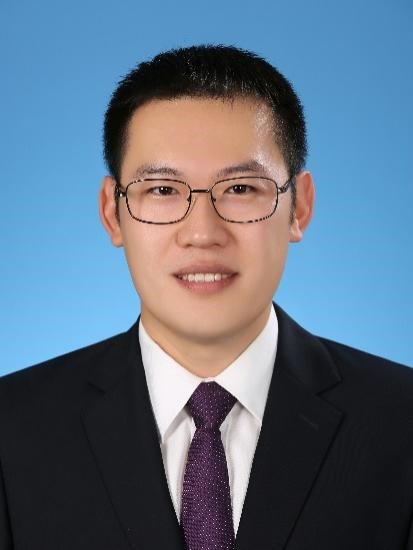}}]{\textbf{Jingjing Wang}} (Senior Member, IEEE) received his B.S. degree in Electronic Information Engineering from Dalian University of Technology, Liaoning, China in 2014 and the Ph.D. degree in Information and Communication Engineering from Tsinghua University, Beijing, China in 2019, both with the highest honors. From 2017 to 2018, he visited the Next Generation Wireless Group chaired by Prof. Lajos Hanzo, University of Southampton, UK. Dr. Wang is currently a professor at School of Cyber Science and Technology, Beihang University. His research interests include AI enhanced next-generation wireless networks, UAV swarm intelligence and confrontation. He has published over 100 IEEE Journal/Conference papers. Dr. Wang was a recipient of the Best Journal Paper Award of IEEE ComSoc Technical Committee on Green Communications \& Computing in 2018, the Best Paper Award of IEEE ICC and IWCMC in 2019.
\end{IEEEbiography}

\vspace{-10mm}
\begin{IEEEbiography}[{\includegraphics[width=1in,height=1.33in]{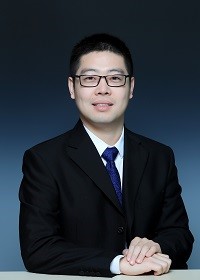}}]{\textbf{Chunxiao Jiang}} (Senior Member, IEEE) is an associate professor in School of Information Science and Technology, Tsinghua University. He received the B.S. degree in information engineering from Beihang University, Beijing in 2008 and the Ph.D. degree in electronic engineering from Tsinghua University, Beijing in 2013, both with the highest honors. From 2011 to 2012 (as a Joint Ph.D) and 2013 to 2016 (as a Postdoc), he was in the Department of Electrical and Computer Engineering at University of Maryland College Park under the supervision of Prof. K. J. Ray Liu. His research interests include application of game theory, optimization, and statistical theories to communication, networking, and resource allocation problems, in particular space networks and heterogeneous networks. Dr. Jiang has served as an Editor of IEEE Transactions on Communications, IEEE Internet of Things Journal, IEEE Wireless Communications, IEEE Transactions on Network Science and Engineering, IEEE Network, IEEE Communications Letters, and a Guest Editor of IEEE Communications Magazine, IEEE Transactions on Network Science and Engineering and IEEE Transactions on Cognitive Communications and Networking. He has also served as a member of the technical program committee as well as the Symposium Chair for a number of international conferences. Dr. Jiang is the recipient of the Best Paper Award from IEEE GLOBECOM in 2013, IEEE Communications Society Young Author Best Paper Award in 2017, the Best Paper Award from ICC 2019, IEEE VTS Early Career Award 2020, IEEE ComSoc Asia-Pacific Best Young Researcher Award 2020, IEEE VTS Distinguished Lecturer 2021, and IEEE ComSoc Best Young Professional Award in Academia 2021. He received the Chinese National Second Prize in Technical Inventions Award in 2018 and Natural Science Foundation of China Excellent Young Scientists Fund Award in 2019. He is Fellow of IET.
\end{IEEEbiography}

\vspace{-10mm}
\begin{IEEEbiography}[{\includegraphics[width=1in,height=1.33in]{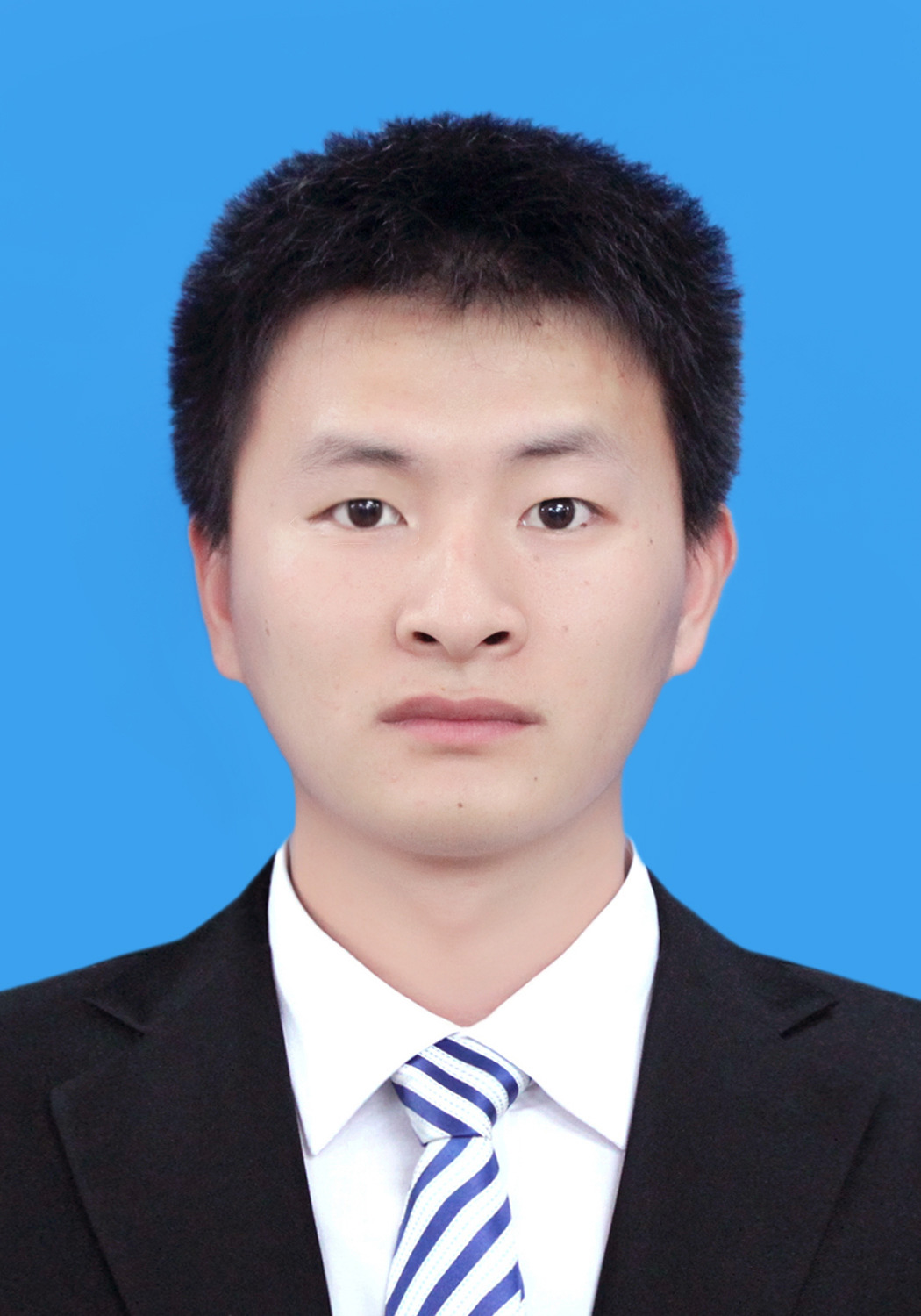}}]{\textbf{Jiaxing Wang}} (Student Member, IEEE) received the B.Sc. degree in information engineering from Nanjing University of Aeronautics and Astronautics, Jiangsu, China, in 2017. He is currently working toward the Ph.D. degree in transportation information engineering and control at the School of Electronic and Information Engineering, Beihang University, Beijing, China. His research interests include heterogeneous communication networks, UAV communications and mmWave communications.
\end{IEEEbiography}

\end{document}